\renewcommand{\Re}{\operatorname{Re}}
\newcommand{\R}{\mathbb{R}}
\newcommand{\E}{\mathbb{E}}
\newcommand{\cal}{\mathcal}
\renewcommand{\P}{\mathbb{P}}
\newcommand{\var}{\operatorname{Var}}
\newcommand{\prob}{\mathbb{P}}
\renewcommand{\c}{\mathrm{C}}
\def\br#1{\left(#1\right)}
\def\brb#1{\left[#1\right]}
\newtheorem{theorem}{Theorem}[section]
\newtheorem{proposition}[theorem]{Proposition}
\newtheorem{lemma}[theorem]{Lemma}
\newtheorem{corollary}[theorem]{Corollary}
\newtheorem{conjecture}[theorem]{Conjecture}
\theoremstyle{definition}
\numberwithin{equation}{section}
\begin{document}


\title[Two point function for critical points of a random plane wave]
{Two point function for critical points\\ of a random plane wave}
\author{Dmitry Beliaev}
\address{Mathematical Institute, University of Oxford}
\email{belyaev@maths.ox.ac.uk}
\author{Valentina Cammarota}
\email{valentina.cammarota@uniroma1.it}
\address{Department of Mathematics, King's College London and Sapienza Universit\`a di Roma}
\author{Igor Wigman}
\address{Department of Mathematics, King's College London }
\email{igor.wigman@kcl.ac.uk}

\keywords{Random Plane Wave, critical points, two point function}

\date{\today}


\begin{abstract}
Random plane wave is conjectured to be a universal model for high-energy eigenfunctions of the Laplace operator on generic compact Riemanian manifolds. This is known to be true on average. In the present paper  we discuss one of important geometric observable: critical points. We first compute one-point function for the critical point process, in particular we compute the expected number of critical points inside any open set. After that we compute the short-range asymptotic behaviour of the two-point function. This gives an unexpected result that the second factorial moment of the number of critical points in a small disc scales as the fourth power of the radius.
\end{abstract}

\maketitle

\section{Introduction and main results}

\subsection{Random Gaussian functions}

Studying the Laplace eigenfunctions and their geometry is a classical subject going back to at least XIX century. It is most important to understand the eigenfunctions behaviour in the high energy limit. For a given domain, this is a difficult question, and we only have limited information about it other than in the few cases where the eigenfunctions is explicitly given.

For a generic chaotic domain (i.e. where the billiard dynamics is chaotic) it was conjectured by Berry \cite{Berry77} that the high energy functions behave like a random superposition of monochromatic plane waves propagating in different (random) directions, usually referred to as the {\em random plane wave}, rigorously defined below. As the comparison between these two is lacking mathematical rigour, one may understand this comparison in different ways.

Berry's conjecture seems to be out of reach by modern analytic techniques; a similar statement for a random linear combination of eigenfunctions with close eigenvalues could be proved though. Namely, for a compact Riemanian manifold $\mathcal{M}$ we can consider an orthonormal basis of eigenfunctions $\phi_i$ satisfying $\Delta \phi_i+t_i^2\phi_i=0$ with $t_0\le t_1\le \dots$, and define the {\em band-limited functions}
$$
f_T=\sum_{T-\sqrt{T}\le t_i\le T} c_i\phi_i
$$
where $c_i$ are i.i.d. normal random variables. It is known \cite{Lax,Hormander,Zelditch09,CaHa16} that the local scaling limit of $f_T$ is the random plane wave.

The above conjectures and results show that the random plane wave is a {\em universal} object, and motivate their further study; here we are interested in their {\em geometry}. As usual, a Gaussian random field could be defined or constructed in two different ways. On one hand we may define it as a concrete random series, an on the other hand we may describe it as uniquely defined in terms of it covariance function via Kolmogorov's Theorem. As a concrete random series we define the random plane wave with energy $E=k^2$ to be
\begin{equation}
\label{eq:RPW definition}
\Psi(z)=\Psi(r ,\theta) = \Re \sum_{n=-\infty}^{\infty} a_n J_{|n|} (k\, r) e^{i n \theta}
\end{equation}
in polar coordinates,
where $J_n$ are Bessel functions and $a_n$ are independent complex Gaussian random variables with variance $2$. Since the Bessel functions decay exponentially fast as functions of the order $n$, the series \eqref{eq:RPW definition} is almost surely convergent, absolutely and uniformly on any compact set, and hence the sum is a real analytic function.

By the definition \eqref{eq:RPW definition}, $\Psi$ is a centred Gaussian random field,
therefore its law is prescribed by the covariance function
\begin{align*}
\psi(z,w):=\mathbb{E}[\Psi(z)\cdot \Psi(w)],
\end{align*}
for $z,w \in {\mathbb R}^2$. It is then easy to evaluate $\psi$ explicitly as
$$
\psi(z,w)=J_0(k|z-w|),
$$
where $J_{0}$ is the Bessel $J$ function of order $0$.
From this representation it follows that $\Psi$ is stationary (i.e. its law is translation invariant), and isotropic, (i.e. its law is invariant under rotations); by the standard abuse of notation we write $\psi(z,w)=\psi(z-w)$.

It also follows directly from \eqref{eq:RPW definition} that the function $\Psi$ is an a.s. solution of the Helmholtz equation
\begin{equation}
\label{eq:Helmholz}
(\Delta + k^2) \Psi = 0,
\end{equation}
i.e. $\Psi$ is an a.s. eigenfunction of $-\Delta$ with eigenvalue $k^2$; we are interested in the geometry of random (or ``typical") solutions  $\Psi$ of \eqref{eq:Helmholz}. The geometric properties considered below are related to the nodal lines (i.e. $\Psi^{-1}(0)$), nodal domains (i.e. connected components of the complement of the nodal set), as well as the level curves ($\Psi^{-1}(c)$), and excursion sets (connected components of $\{z:\: \Psi(z)>c\}$). The geometry of these sets is closely related to that of the set of {\em critical points} of $\Psi$. The critical points and values and their applications appear a lot (\cite{NS09,So12,NS15,G-W,Estrade} to mention a few) in the literature on nodal domains of random plane waves and, more generally, smooth Gaussian fields.


\subsection{Critical points}
There are several intriguing questions on the critical points of random fields. From our perspective, of the most important questions are the ones on the distribution of the critical points {\em number}, and the corresponding critical {\em values}.
This general question could be made more concrete in different ways, most basically, evaluating the expected number of critical points inside a given domain; the latter admits a precise answer in a more general scenario. In an analogous case of random spherical harmonics (that converges to $\Psi$ as a scaling limit), Cammarota, Marinucci and Wigman \cite{CaMaWi} evaluated the expected number of minima, maxima and saddles whose value falls into a given window, and also determined the order of magnitude of the corresponding variance for a ``generic" window, which does not include the total number of critical points. In a subsequent work Cammarota and Wigman \cite{CaWi} resolved this outstanding case by evaluating the variance of the total number of critical points to be of lower order as compared to the generic case.

\begin{figure}[h]
\begin{subfigure}{.33\textwidth}
  \centering
  \includegraphics[scale=0.4]{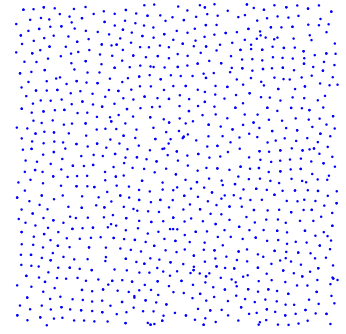}
 \end{subfigure}%
\begin{subfigure}{.33\textwidth}
  \centering
  \includegraphics[scale=0.5]{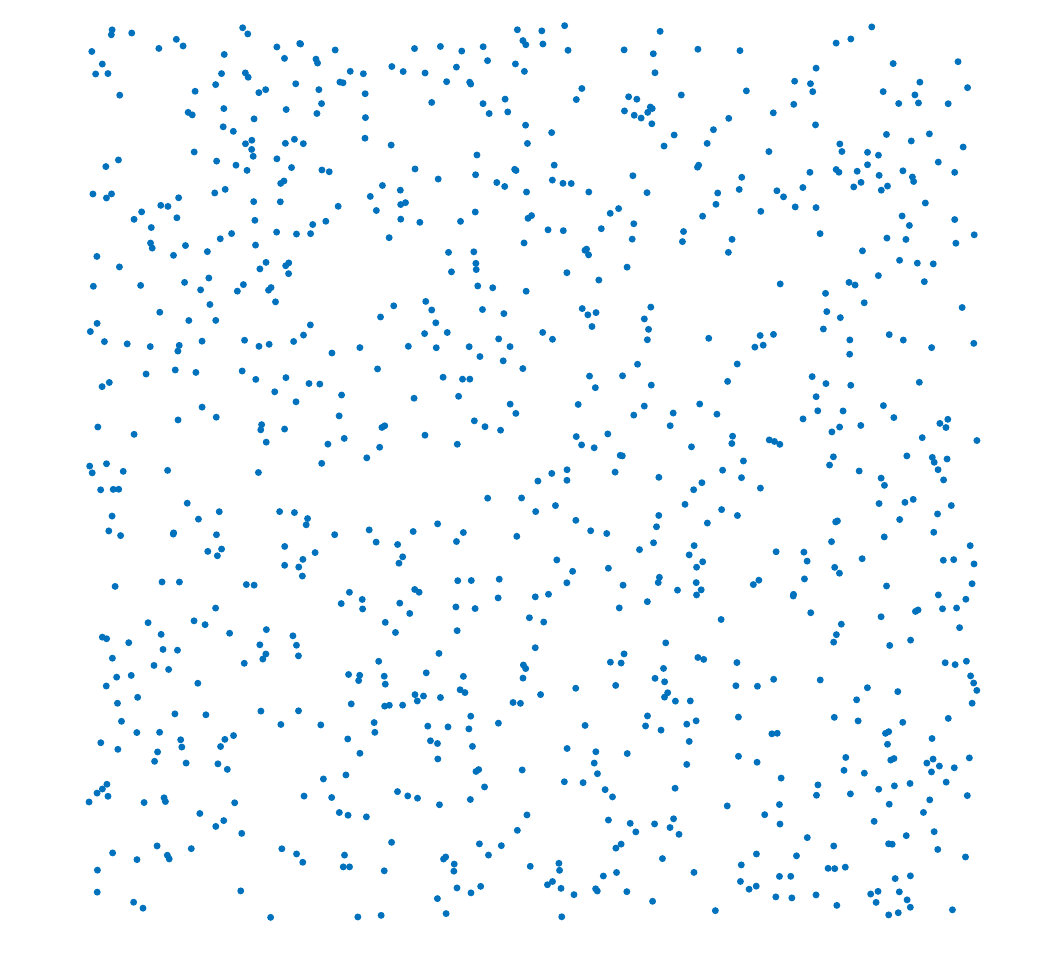}
\end{subfigure}
\begin{subfigure}{.33\textwidth}
  \centering
  \includegraphics[scale=0.5]{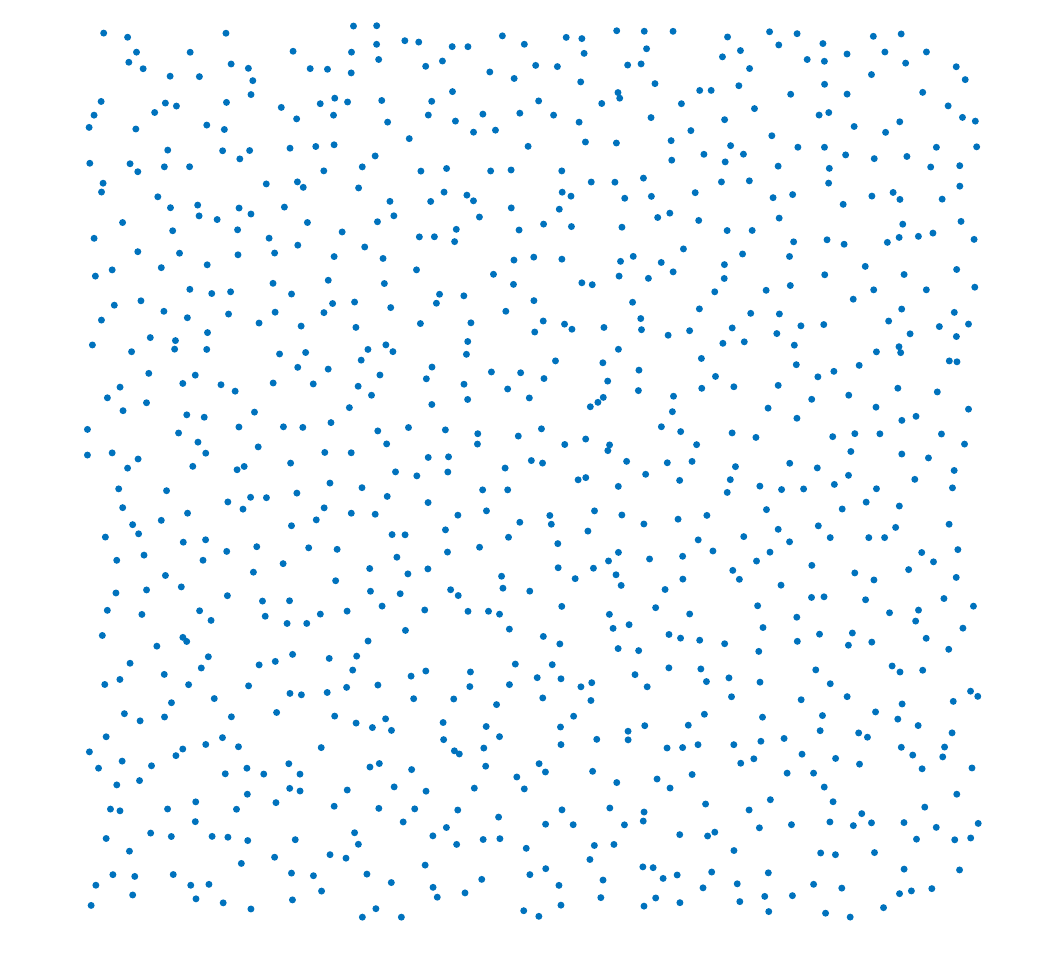}
\end{subfigure}
\caption{Left: critical points of a random plane wave. Center: The Poisson point process which has the same density. Right: a bulk part of the Ginibre ensemble with the same density. }
\label{fig:Point processes}
\end{figure}

It is important to understand the finer aspects of the {\em structure} of critical points. Upon looking at Figure \ref{fig:Point processes} (left),  it is evident that the structure of critical points is very ``rigid" or ``regular"; however it is not entirely clear how to formulate or quantify this statement with mathematical rigour. One can compare this to two other very well known translation invariant processes:
Figure \ref{fig:Point processes} (centre) one may observe the Poisson point process, and Figure \ref{fig:Point processes} (right) is the
corresponding picture for Ginibre point process; both are scaled to have the same intensity as the critical points in Figure
\ref{fig:Point processes} (left).

For all three point processes depicted in \ref{fig:Point processes} the number of points in a square of side-length $n$ is $c\cdot n^2$ where $c=1/2\sqrt{3}\pi$. This value of $c$ is the natural intensity of critical points (see Proposition \ref{expp}) of $\Psi$, whereas the other two point processes are so rescaled. The fluctuations of the total number of points in a square depend a lot on the point process. Though formally
stated for random spherical harmonics (which are only equivalent to $\Psi$ in the limit, under a natural scaling), it is likely that one
may deduce from \cite{CaWi} that the variance for the critical points scales like $n^2\log(n)$, whereas
for the Poisson point process it is asymptotic to $c\cdot n^2$ (with the same $c$ as above), and for the Ginibre ensemble it is of order $n$.

On the {\em local} scale, the probability that there is at least one point in a small disc or radius $\rho$ is the same for all three processes due to the translation invariance and our choice of normalization. The respective probabilities that there are exactly two points in a small disc are very different though. For the Poisson point process it is the probability that a Poisson random variable with intensity $c\pi \rho^2$ is equal to $2$. By the definition it is given by
$$
\prob(2\text{ points})=\frac{(c\pi \rho^2)^2\exp(-c\pi\rho^2)}{2}\approx \frac{c^2\pi^2\cdot \rho^4}{2}=\frac{1}{2^3 3}\cdot \rho^4,
$$
whereas for the Ginibre ensemble (which is a determinantal point process) this probability is of order $\rho^6$. That means that the points corresponding to the Ginibre ensemble repel each other, inducing on their visible regularity or rigidity.

Our principle result (Theorem \ref{sfm}) is the evaluation of the $2$nd factorial moment of the number of critical points of $\Psi$ in a radius $\rho$ disc, asymptotically for $\rho\rightarrow 0$. This suggests (Corollary \ref{prob012} and Conjecture \ref{uno})
that the probability of having precisely two critical points in the disc is
$$
\frac{1}{2^6 3\sqrt{3}}\rho^4+o(\rho^4),
$$
of the same order of magnitude (and leading constant smaller by the factor $8\sqrt{3}\approx 13.8$) as the probability of finding $2$-points in the same disc for the Poisson point process. This minor difference could not stand for the striking difference in the appearance of the two processes, highly regular for the critical points of $\Psi$. It is also worth noting, that despite the fact that critical points are more ``lattice like'' than the Ginibre ensemble, for the critical points clustering is significantly more likely (see Figure \ref{fig:Point processes blowup}).

\begin{figure}[h]
  \centering
  \includegraphics[scale=0.7]{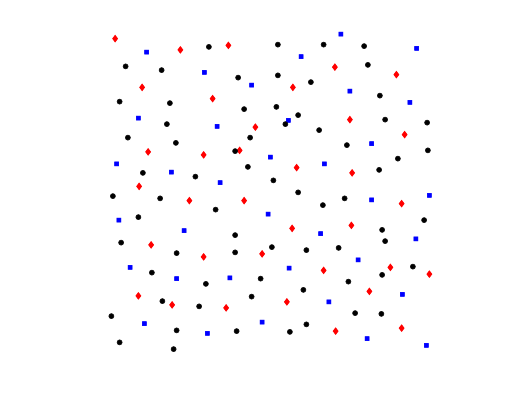}%
\caption{Central part of the critical points process from Figure  \ref{fig:Point processes}. Here we distinguish between different types of critical points: red diamonds are local minima, blue squares are local minima, and black discs are saddles.}
\label{fig:Point processes blowup}
\end{figure}

To formulate our main results we introduce the following notation for the number of critical points of a random plane wave $\Psi$ in a disc $\cal{B}(\rho)$ of radius $\rho>0$:
$$
{\cal N}^c_{\rho}=\# \{x \in \cal{B}(\rho): \nabla \Psi(x)=0\}.
$$
The numbers ${\cal N}^{saddle}_{\rho}$, ${\cal N}^{min}_{\rho}$, ${\cal N}^{max}_{\rho}$, and ${\cal N}^{e}_{\rho}$ of saddles, minima, maxima, and extrema respectively may also be defined.

Since the function $\Psi$ is translation invariant, the above random variables are independent of the center of the disc, so for simplicity, we may assume that it is centred at the origin. Another useful observation is that the random plane waves are scale invariant (that is, the law of $\Psi$ with arbitrary $k$ on $\cal{B}(1)$ is (up to homothety) equivalent to the law of $\Psi$ with $k=1$ on $\cal{B}(k)$); hence, with no loss of generality, we may assume that $k=1$, as we will for the rest of this manuscript.
The following principal results of this manuscript evaluate the expectation and the second factorial moment of ${\cal N}^c_{\rho}$ for small values of $\rho$. The first result, consistent to a similar statement from \cite[Proposition 1.1]{CaMaWi}, is for the expectations.

\begin{proposition} \label{expp}
For every $\rho >0$ we have
\begin{equation}
\label{eq: first moment}
\E[{\cal N}^c_{\rho}]= \frac{1}{2 \sqrt{3}} \rho^2
\end{equation}
and
\begin{equation}
4\E[{\cal N}^{min}_{\rho}]=4\E[{\cal N}^{max}_{\rho}]=2\E[{\cal N}^{saddle}_{\rho}]=2\E[{\cal N}^{e}_{\rho}]=\E[{\cal N}^{c}_{\rho}].
\end{equation}
\end{proposition}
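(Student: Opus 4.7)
The plan is to apply the Kac-Rice formula for the zero set of the gradient field $\nabla\Psi$ and reduce the problem to an explicit finite-dimensional Gaussian integral. By stationarity and isotropy of $\Psi$ the integrand in Kac-Rice is constant, so we get
\begin{equation*}
\E[{\cal N}^c_\rho]=\pi\rho^2\cdot p_{\nabla\Psi(0)}(0)\cdot\E\bigl[|\det \nabla^2\Psi(0)|\,\bigm|\,\nabla\Psi(0)=0\bigr],
\end{equation*}
with analogous formulas for $\cal N^{min},\cal N^{max},\cal N^{saddle}$, where the absolute value of the determinant is multiplied by the indicator that $\nabla^2\Psi(0)$ is positive definite, negative definite, or indefinite respectively.

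Next I would compute the joint law of $(\nabla\Psi(0),\nabla^2\Psi(0))$ from the expansion
\begin{equation*}
J_0(|z|)=1-|z|^2/4+|z|^4/64+O(|z|^6),
\end{equation*}
reading off the needed 2nd and 4th partial derivatives of $\psi$ at the origin. The two key observations to extract are: (i) odd partials of $\psi$ at $0$ vanish, so $\nabla\Psi(0)$ is independent of $\nabla^2\Psi(0)$; this eliminates the conditioning and lets me drop the Hessian expectation to an unconditional one. (ii) The Helmholtz equation $\Psi_{xx}+\Psi_{yy}=-\Psi$ together with isotropy forces the three quantities
\begin{equation*}
s:=\Psi_{xx}+\Psi_{yy}=-\Psi(0),\qquad u:=\Psi_{xx}-\Psi_{yy},\qquad v:=2\Psi_{xy}
\end{equation*}
to be mutually independent with $s\sim N(0,1)$ and $u,v\sim N(0,1/2)$. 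Writing $\det\nabla^2\Psi=(s^2-u^2-v^2)/4$ reduces the Hessian expectation to $\E|s^2-(u^2+v^2)|/4$. Also, the $2\times 2$ covariance of $\nabla\Psi(0)$ is $\tfrac12 I_2$, giving $p_{\nabla\Psi(0)}(0)=1/\pi$, which cancels the $\pi\rho^2$ factor.

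The remaining computation is elementary: set $\alpha=s^2\sim\chi^2_1$ and $\beta=u^2+v^2\sim\operatorname{Exp}(1)$, independent. Since $\E[\alpha]=\E[\beta]=1$, one has $\E|\alpha-\beta|=2\E[(\beta-\alpha)^+]=2\E_\alpha[e^{-\alpha}]$ (using $\int_a^\infty(b-a)e^{-b}db=e^{-a}$), and the Laplace transform of $\chi^2_1$ at $1$ equals $1/\sqrt{3}$, so $\E|\alpha-\beta|=2/\sqrt{3}$ and $\E[|\det\nabla^2\Psi(0)|]=1/(2\sqrt{3})$. This yields the announced formula $\E[\cal N^c_\rho]=\rho^2/(2\sqrt{3})$.

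For the classification identities I would split according to the sign of the trace $s$ and of $\alpha-\beta$: the Hessian is positive definite iff $s>0$ and $\alpha>\beta$, negative definite iff $s<0$ and $\alpha>\beta$, indefinite iff $\alpha<\beta$. By the symmetry $s\mapsto -s$ (independent of $\alpha,\beta$), the positive definite and negative definite cases each contribute $\tfrac12\E[(\alpha-\beta)^+]/4=1/(8\sqrt{3})$ to the per-area density, while the indefinite case contributes $\E[(\beta-\alpha)^+]/4=1/(4\sqrt{3})$. This gives $\E[\cal N^{min}_\rho]=\E[\cal N^{max}_\rho]=\tfrac12\E[\cal N^{saddle}_\rho]=\tfrac12\E[\cal N^e_\rho]=\tfrac14\E[\cal N^c_\rho]$, as claimed. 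There is no real obstacle here: the computations are standard once one exploits the Helmholtz equation to diagonalize the Hessian distribution; the only technical point to verify cleanly is the independence of gradient and Hessian, which justifies dropping the conditioning in Kac-Rice.
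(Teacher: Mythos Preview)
Your proposal is correct and follows the same Kac--Rice strategy as the paper: exploit stationarity to reduce to a constant density, use independence of $\nabla\Psi(0)$ and $\nabla^2\Psi(0)$ to drop the conditioning, and evaluate $\E|\det\nabla^2\Psi(0)|$ via a chi-squared computation. The difference lies in how the Hessian expectation is computed. The paper works in the coordinates $(Y_1,Y_2,Y_3)=\sqrt{8}(\Psi_{xx},\Psi_{xy},\Psi_{yy})$, passes to $W_3=Y_1+Y_3$ (proportional to the trace), conditions on $W_3=t$, and then integrates an explicit function of $t$ against the Gaussian density of $W_3$. Your parametrisation $(s,u,v)=(\Psi_{xx}+\Psi_{yy},\,\Psi_{xx}-\Psi_{yy},\,2\Psi_{xy})$ is the same idea made more transparent: the Helmholtz identity $s=-\Psi$ immediately gives $\operatorname{Var}(s)=1$, isotropy forces $s,u,v$ to be independent, and $\det\nabla^2\Psi=(s^2-u^2-v^2)/4$ factors the calculation into $\alpha=s^2\sim\chi^2_1$ against $\beta=u^2+v^2\sim\mathrm{Exp}(1)$. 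Your Laplace-transform evaluation $\E|\alpha-\beta|=2\E[e^{-\alpha}]=2/\sqrt{3}$ is shorter than the paper's explicit $t$-integral, though both are elementary. Finally, your treatment of the classification identities (splitting on $\operatorname{sign}(s)$ and $\operatorname{sign}(\alpha-\beta)$, using that $\operatorname{sign}(s)$ is independent of $(\alpha,\beta)$) is more explicit than what appears in the paper's proof section, which only establishes the total count and defers the min/max/saddle breakdown to the cited reference~\cite{CaMaWi}.
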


Note that \eqref{eq: first moment} is not an asymptotic result, but rather a precise identity. More generally, same proof works on any open domain $\Omega$, i.e. the expected number of critical points lying in $\Omega$ is equal to $\frac{\mathrm{Area}(\Omega)}{2 \sqrt{3}\pi}$.
Evaluating the second moment is more involved, and we were unable to obtain a precise expression.
Instead, we show how it behaves asymptotically as  as the radius $\rho \to 0$.

\begin{theorem} \label{sfm}
As $\rho \to 0$, we have the following expansion for the number of critical points:
\begin{equation}
\label{eq:second moment}
\E[{\cal N}^c_{\rho} \; ({\cal N}^c_{\rho}-1)]
= \frac{1}{2^5 \, 3 \, \sqrt 3 } \; \rho^4+O(\rho^6).
\end{equation}
For ${\cal N}^{saddle}_{\rho}$, ${\cal N}^{min}_{\rho}$, ${\cal N}^{max}_{\rho}$, and ${\cal N}^{e}_{\rho}$ -- numbers of saddles, local minima,  local maxima, and local extrema in a ball of radius $\rho$ we have
\begin{align}
\label{eq:max-max moment}
&\E[{\cal N}^{max}_{\rho}({\cal N}^{max}_{\rho}-1)]=\E[{\cal N}^{min}_{\rho}({\cal N}^{min}_{\rho}-1)]=O(\rho^7\log(1/\rho)),
\\
\label{eq:extremum-extremum}
&\E[{\cal N}^{e}_{\rho}({\cal N}^{e}_{\rho}-1)]=O(\rho^7\log(1/\rho)),
\\
\label{eq:saddle-saddle moment}
&\E[{\cal N}^{saddle}_{\rho}({\cal N}^{saddle}_{\rho}-1)]=O(\rho^7\log(1/\rho)),
\\
\label{eq:max-min moment}
&\E[{\cal N}^{max}_{\rho}{\cal N}^{min}_{\rho}]=O(\rho^{12}),
\\
\label{eq:extremum-saddle moment}
&\E[{\cal N}^{e}_{\rho}{\cal N}^{saddle}_{\rho}]=\frac{1}{2^6 \, 3 \, \sqrt 3 } \; \rho^4+O(\rho^6).
\end{align}
\end{theorem}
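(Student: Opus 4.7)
The plan is to derive the five identities through the Kac--Rice formula for the joint intensity of pairs of critical points and then to extract small-distance asymptotics. Writing $H=\operatorname{Hess}\Psi$, Kac--Rice yields
\begin{equation*}
\E[\mathcal{N}^c_\rho(\mathcal{N}^c_\rho-1)]=\int_{\mathcal{B}(\rho)^2} K(|x-y|)\,dx\,dy,
\end{equation*}
where
\begin{equation*}
K(x,y)=p_{\nabla\Psi(x),\nabla\Psi(y)}(0,0)\cdot\E\bigl[|\det H(x)|\,|\det H(y)|\,\big|\,\nabla\Psi(x)=\nabla\Psi(y)=0\bigr],
\end{equation*}
and analogous kernels for the four sub-types are obtained by inserting the relevant sign indicators on the Hessians. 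Stationarity and isotropy make each kernel depend only on $r=|x-y|$, so the double integral reduces to $\int_0^{2\rho} K(r)\cdot 2\pi r\, A(\rho,r)\,dr$, where $A(\rho,r)=\pi\rho^2-2\rho r+O(r^3)$ is the area of intersection of two $\rho$-discs separated by $r$. Hence the asymptotics of each moment are controlled by the Taylor expansion of $K(r)$ at $0$.

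Placing $x=0$, $y=re_1$ and Taylor expanding $\nabla\Psi(re_1)=\nabla\Psi(0)+r\,\partial_1\nabla\Psi(0)+\tfrac{r^2}{2}\,\partial_{11}\nabla\Psi(0)+O(r^3)$, the Gaussian pair $(\nabla\Psi(0),\nabla\Psi(re_1))$ degenerates. The natural remedy is to change variables to $(\nabla\Psi(0),r^{-1}(\nabla\Psi(re_1)-\nabla\Psi(0)))$, whose law converges to that of $(\nabla\Psi(0),\partial_1\nabla\Psi(0))$; this introduces a Jacobian factor $r^{-2}$ into the joint density. Under the conditioning $\nabla\Psi(0)=\nabla\Psi(re_1)=0$, the Taylor identity forces $\partial_{11}\Psi(0)=-\tfrac{r}{2}\partial_{111}\Psi(0)+O(r^2)$ and $\partial_{12}\Psi(0)=-\tfrac{r}{2}\partial_{112}\Psi(0)+O(r^2)$, with analogous relations at $re_1$ in which the $O(r)$ terms have the opposite sign. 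Consequently $|\det H(0)|\cdot|\det H(re_1)|=O(r^2)$, exactly cancelling the Jacobian, and $K(0):=\lim_{r\to 0}K(r)$ is finite, positive and explicitly computable as a Gaussian integral in the variables $\partial_{22}\Psi$, $\partial_{111}\Psi$, $\partial_{112}\Psi$, $\partial_{122}\Psi$, $\partial_{222}\Psi$, with covariances read off from the Taylor series of $\psi(r)=J_0(r)$. Matching the leading term $\pi^2\rho^4\,K(0)$ to $\rho^4/(2^5\cdot 3\sqrt{3})$ determines the constant, and the $O(\rho^6)$ remainder follows once one verifies, using the reflection symmetry $e_1\mapsto -e_1$ of the conditional law, that the expansion of $K(r)$ contains only even powers of $r$ at the relevant order, combined with the explicit expansion of $A(\rho,r)$.

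For the sub-types the same framework applies, but the sign indicators interact with the structure of the conditional Hessian eigenvalues. At leading order one eigenvalue at each of $0$ and $re_1$ equals $\partial_{22}\Psi(0)+O(r)$ --- hence has the \emph{same} sign at the two points --- while the other eigenvalue is $O(r)$, and the relation $\partial_{11}\Psi(re_1)=-\partial_{11}\Psi(0)+O(r^2)$ enforced by the conditioning makes its sign \emph{opposite} at the two points. So at leading order every pair consists of exactly one extremum and one saddle: this simultaneously yields \eqref{eq:extremum-saddle moment} with constant precisely half that in \eqref{eq:second moment} (consistent with $\mathcal{N}^c_\rho=\mathcal{N}^e_\rho+\mathcal{N}^{saddle}_\rho$), forces the leading-order vanishing of the kernels underlying \eqref{eq:max-max moment}, \eqref{eq:extremum-extremum} and \eqref{eq:saddle-saddle moment}, and pushes the max-min kernel to an even higher order, explaining \eqref{eq:max-min moment}.

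The main obstacle is upgrading these vanishing statements into the precise orders stated in \eqref{eq:max-max moment}--\eqref{eq:max-min moment}: one must continue the Taylor expansion of the conditional Hessian spectrum several orders further and identify the first non-vanishing contribution in each sub-type kernel. The $\log(1/\rho)$ factor should emerge from a near-degenerate marginal in the conditional Gaussian expectation, producing an integral of type $\int_0^{2\rho} r^k\log(1/r)\,dr$, while the much faster $\rho^{12}$ decay in \eqref{eq:max-min moment} requires a more delicate argument exploiting that a max-min pair imposes \emph{two} incompatible sign constraints on the $O(r)$ eigenvalues, which can only be met at a correspondingly higher order in $r$.
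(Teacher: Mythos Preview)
Your plan is conceptually aligned with the paper's proof and correctly identifies the key mechanism: the two-point function $K_2(r)$ has a finite positive limit as $r\to 0$ because the degeneration of the joint density of $(\nabla\Psi(0),\nabla\Psi(re_1))$ (a factor $r^{-2}$) is exactly compensated by the conditional law forcing one Hessian eigenvalue at each point to be $O(r)$, so that $|\det H(0)|\,|\det H(re_1)|=O(r^2)$. Your observation that at leading order these $O(r)$ eigenvalues carry opposite signs at the two points while the $O(1)$ eigenvalues agree --- so that a close pair is generically saddle/extremum --- is exactly what drives the sub-type estimates.

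Where your proposal departs from the paper is in the execution. The paper does not work with the Taylor expansion of the field directly; it computes the $6\times6$ conditional covariance $\mathbf{\Delta}(r)$ of $(\nabla^2\Psi(0),\nabla^2\Psi(w))$, diagonalises it explicitly (the block structure makes the eigenvalues and eigenvectors algebraic in the entries), and Taylor-expands: one eigenvalue is $\sim 2/3$, two are $\sim r^2$, two are $\sim r^4$, and one is $O(r^8)$. In the standardised coordinates $\xi$ the product of Hessian determinants becomes $-\tfrac{1}{2^7\cdot 3}\,\xi_4^2\xi_6^2\,r^2+O(r^4)$; integrating against the standard Gaussian and dividing by $\sqrt{\det\mathbf{A}(r)}\sim \sqrt{3}\,r^2/16$ yields the constant. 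Your route via third-order jets would reach the same endpoint but you still owe the explicit six-dimensional Gaussian computation; ``explicitly computable'' is not yet ``computed''.

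There is one concrete inaccuracy. The $\log(1/\rho)$ in \eqref{eq:max-max moment}--\eqref{eq:saddle-saddle moment} does not arise from a near-degenerate Gaussian marginal nor from an integral $\int_0^{2\rho}r^k\log(1/r)\,dr$. In the paper's spherical coordinates $s=\xi/|\xi|\in S^5$ one has $c_1=c_{1,1}r+O(r^2)$, $c_2=-c_{1,1}r+O(r^2)$ with $c_{1,1}\propto s_4s_6$; demanding both $c_i>0$ forces $|s_4s_6|<\c r$, on which region $|c_1c_2|=O(r^4)$, and the \emph{spherical volume} of $\{|s_4s_6|<\c r\}$ is $O(r\log(1/r))$ --- the standard hyperbola-neighbourhood estimate. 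This gives $K_2^{min,min}(r)=O(r^3\log(1/r))$ directly, whence $O(\rho^7\log(1/\rho))$. For the max--min case the further requirement that $b_1,b_2$ differ in sign, while $b_1=b_2+O(r^3)$ (they agree through order $r^2$), imposes an $O(r^3)$-thin linear slab in $(s_5,s_6)$ together with $s_1^2+s_4^2=O(r)$; the combined volume bound yields $K_2^{min,max}=O(r^8)$ and hence $O(\rho^{12})$. You correctly anticipate that an additional sign constraint is responsible, but it is the \emph{trace} (your $b_i$), not a second eigenvalue, that carries it. These spherical-measure arguments are precisely the ``main obstacle'' you flag, and the paper resolves them this way rather than by pushing the field's Taylor expansion further.
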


There is no evidence that the estimates \eqref{eq:max-max moment}, \eqref{eq:saddle-saddle moment}, and  \eqref{eq:max-min moment} are sharp. 
In fact, it seems quite likely, that they are not, for \eqref{eq:max-min moment} particularly. Since the extremum-saddle covariance 
\eqref{eq:extremum-saddle moment} gives the main contribution to \eqref{eq:second moment}, the last formula \eqref{eq:extremum-saddle moment} is an asymptotic and as such gives a precise decay rate.
For integer-valued random variables it is more natural to consider factorial moments instead of the usual moments. The asymptotic behaviour of the variance, dominated by the expectation (and hence less useful), can be easily obtained by combining \eqref{eq: first moment} 
and \eqref{eq:second moment}
$$
\var\brb{{\cal N}^c_{\rho}}=\frac{1}{2 \sqrt{3}} \rho^2-\frac{8\sqrt{3}-1}{2^53\sqrt{3}}\rho^4+\dots.
$$
Since all our random variables ${\cal N}_\rho$ are integer valued, the first and second factorial moments yield the asymptotics for probabilities of the events ${\cal N}_\rho=1$ and ${\cal N}_\rho\ge 2$, as follows.

\begin{corollary} \label{prob012}
As $\rho \to 0$ we have the following asymptotic formulas for probabilities to have exactly one point:
\begin{equation}
\label{eq:prob(crit=1)}
\begin{aligned}
&\mathbb{P}({\cal N}^c_{\rho}=1)= \frac{1}{2 \sqrt{3}} \rho^2 + O(\rho^4),
\\
&\mathbb{P}({\cal N}^{min}_{\rho}=1)=\mathbb{P}({\cal N}^{max}_{\rho}=1)= \frac{1}{8 \sqrt{3}} \rho^2 + O(\rho^4),
\\
&\mathbb{P}({\cal N}^{e}_{\rho}=1)= \frac{1}{4 \sqrt{3}} \rho^2 + O(\rho^4),
\\
&\mathbb{P}({\cal N}^{saddle}_{\rho}=1)= \frac{1}{4 \sqrt{3}} \rho^2 + O(\rho^4).
\end{aligned}
\end{equation}
For probabilities to have at least two points we have
\begin{equation}
\label{eq:two point probabilities}
\begin{aligned}
&\P({\cal N}^c_{\rho}\ge 2) = O(\rho^4),
\\
&\P({\cal N}^{min}_{\rho}\ge 2) =
\P({\cal N}^{max}_{\rho}\ge 2) =O(\rho^7 \log(1/\rho)),
\\
&\P({\cal N}^{saddle}_{\rho}\ge 2)  =O(\rho^7 \log(1/\rho)),
\\
&\P({\cal N}^{min}_{\rho}\ge 1, {\cal N}^{max}_{\rho}\ge 1)=O(\rho^{12})
\end{aligned}
\end{equation}
\label{eq:three point probability}
Finally, for the probability to have three points we have
\begin{equation}
\P({\cal N}^c_{\rho}\ge 3)=O(\rho^7 \log(1/\rho)).
\end{equation}

\end{corollary}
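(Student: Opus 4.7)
The plan is to reduce each probability estimate to a direct application of Proposition \ref{expp} and Theorem \ref{sfm}, using only two elementary facts about a non-negative integer-valued random variable $N$:
$$
\E[N(N-1)] \ge 2\,\P(N \ge 2), \qquad \E[N] - \E[N(N-1)] \le \P(N = 1) \le \E[N].
$$
The upper bound is trivial, and the lower bound follows from writing $\P(N=1)=\E[N]-\sum_{k\ge 2}k\,\P(N=k)$ and noting that $k\le k(k-1)$ for $k\ge 2$.

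For each probability of the form $\P({\cal N}_\rho = 1)$ in \eqref{eq:prob(crit=1)}, I would substitute the corresponding first moment from Proposition \ref{expp} and bound the correction term via Theorem \ref{sfm}. For the total critical point count the correction is $O(\rho^4)$ by \eqref{eq:second moment}, while for ${\cal N}^{min}_\rho$, ${\cal N}^{max}_\rho$, ${\cal N}^{saddle}_\rho$, and ${\cal N}^{e}_\rho$ the corrections are $O(\rho^7\log(1/\rho))$ by \eqref{eq:max-max moment}--\eqref{eq:saddle-saddle moment}, which are absorbed into the $O(\rho^4)$ error stated in \eqref{eq:prob(crit=1)}.

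The first three bounds in \eqref{eq:two point probabilities} follow immediately from the Markov-type inequality $\P(N\ge 2)\le \tfrac{1}{2}\E[N(N-1)]$ combined with the corresponding second factorial moment from Theorem \ref{sfm}. For the mixed bound, I would use the observation that on the event $\{{\cal N}^{min}_\rho\ge 1,\,{\cal N}^{max}_\rho\ge 1\}$ the product ${\cal N}^{min}_\rho\cdot{\cal N}^{max}_\rho$ is at least $1$, whence Markov's inequality gives
$$
\P({\cal N}^{min}_\rho\ge 1,\,{\cal N}^{max}_\rho\ge 1) \le \E[{\cal N}^{min}_\rho\cdot{\cal N}^{max}_\rho]=O(\rho^{12})
$$
by \eqref{eq:max-min moment}.

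Finally, for the three-point bound the key observation is the deterministic identity ${\cal N}^c_\rho = {\cal N}^e_\rho + {\cal N}^{saddle}_\rho$, from which pigeonhole yields the inclusion $\{{\cal N}^c_\rho \ge 3\} \subseteq \{{\cal N}^e_\rho \ge 2\} \cup \{{\cal N}^{saddle}_\rho \ge 2\}$; a union bound together with \eqref{eq:extremum-extremum} and \eqref{eq:saddle-saddle moment} then yields the claimed $O(\rho^7\log(1/\rho))$ rate. Since the corollary is a routine translation of the factorial moment bounds of Theorem \ref{sfm} into probability bounds via standard inequalities, there is no substantive obstacle; the only place requiring mild care is the uniform matching of the various error terms in the $=1$ estimates to the stated $O(\rho^4)$ across all four critical point types, which is immediate from $\rho^7\log(1/\rho)=o(\rho^4)$.
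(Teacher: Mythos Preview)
Your argument is correct and follows essentially the same route as the paper's proof: both derive the $\P(N=1)$ asymptotics by sandwiching $\P(N=1)$ between $\E[N]$ and $\E[N]-\E[N(N-1)]$, both obtain the $\P(N\ge 2)$ bounds by Markov applied to the second factorial moment, and both deduce the three-point bound via the inclusion $\{{\cal N}^c_\rho\ge 3\}\subseteq\{{\cal N}^e_\rho\ge 2\}\cup\{{\cal N}^{saddle}_\rho\ge 2\}$. Your write-up is in fact slightly more explicit than the paper's in spelling out the relevant inequalities for integer-valued random variables, but the content is the same.
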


\begin{proof} The proof is straightforward. For the sake of notational convenience we write $N={\cal N}^c_{\rho}$. The expectation and the second factorial moment could be written as series
\begin{equation*}
\begin{aligned}
\E\brb{N}&=\P(N=1)+2\P(N=2)+3\P(N=3)+\dots
\\
\E\brb{N(N-1)}&=2\P(N=2)+6\P(N=3)+12\P(N=4)+\dots
\end{aligned}
\end{equation*}
Comparing the coefficients in front of $\P(N=k)$ we see that if the second moment is o-small of the expectation, then the expectation is dominated by $\P(N=1)$. In this case $\P(N=1)$ has the same leading term as the expectation and the error term is of the same order as the second moment. This, combined with the results of Proposition \ref{expp}, proves formulas \eqref{eq:prob(crit=1)}.
The estimates \eqref{eq:two point probabilities} are obtained by applying Markov inequality to the results of Theorem \ref{sfm}. Finally, to prove \eqref{eq:three point probability}, we notice that the event ${\cal N}^c\ge 3$ is majorised by ${\cal N}^{e}\ge 2$ or ${\cal N}^{saddle}\ge 2$.
\end{proof}

Note that the third factorial moment should be dominated by the event $N=3$, which, by \eqref{eq:three point probability}, is $O(\rho^7\log(1/\rho))$. This gives a strong evidence that
$$
\E\brb{N(N-1)(N-2)}=o(\rho^4).
$$
Assuming that this is indeed true, we can repeat the argument above and compare the coefficients in the second and third factorial moments and show that
$$
\P(N=2)=\rho^4/2^6 3\sqrt{3}+o(\rho^4).
$$

\subsection{Outline of the proofs}

The proofs of Proposition \ref{expp} and Theorem \ref{sfm} are based on the Kac-Rice formula applied to the gradient of $\Psi$. The Kac-Rice formula is a standard tool for computing the expected number and higher (factorial) moments of the zero set of a Gaussian field.
In general, under some non-degeneracy conditions on the given random field, for every $n\ge 1$ the factorial moments are given by:
\begin{align}
\label{KRn}
\mathbb{E}[ {\cal N}^c_{\rho} ({\cal N}^c_{\rho}-1) \cdots ({\cal N}^c_{\rho}-(n-1))]= \idotsint_{{\cal B}(\rho) \times  \cdots \times {\cal B}(\rho) } K_n({\bf z}) \; d {\bf z},
\end{align}
where ${\bf z}=(z_1, \dots z_n) \in {\cal B}(\rho) \times \cdots \times {\cal B}(\rho) \subset \mathbb{R}^{2n}$, and $K_n$ is the $n$-point correlation function defined as the conditional Gaussian expectation
\begin{align} \label{kkn}
K_n({\bf z})=\phi_{(\nabla \Psi(z_1), \dots, \nabla \Psi(z_n))} (0, \dots, 0) \cdot {\mathbb E} \left[ \prod_{i=1}^n |\text{det} H_{\Psi} (z_i)| \;  \big| \nabla \Psi(z_1)= \cdots= \nabla \Psi(z_n)=0  \right],
\end{align}
where $\phi_{(\nabla \Psi(z_1), \dots, \nabla \Psi(z_n))}(0, \dots, 0)$ is the density function of the Gaussian vector
$$(\nabla \Psi(z_1), \dots, \nabla \Psi(z_n))$$ evaluated at $(0, \dots, 0)$, and $H_{\Psi} (z_i)$ is the Hessian matrix of $\Psi$ at $z_i$. The Kac-Rice formula in \eqref{KRn} holds under the condition that the Gaussian vector  $(\nabla \Psi(z_1), \dots,  \nabla \Psi(z_1))$ is non degenerate.

For $n=1$ the computation of $K_1$ is straightforward; it is essentially the same as in \cite{CaMaWi}. We give it below since
demonstrates the use of the Kac-Rice formula.
The case $n=2$ is more involved and the asymptotics of $K_{2}(z_{1},z_{2})$ as $z_2\to z_1$  (inducing on the second factorial moment)
was entirely unexpected. Based on the above computer simulations one would expect for the critical points repel, i.e. as $z_2\to z_1$, $K_2(z_1,z_2)\to 0$. That would have indicated that the second factorial moment is $o(\rho^4)$, with plausible true order $\rho^5$ of decay. To our surprise, a precise analysis of the relevant Gaussian integrals have shown that $K_2$ does not vanish on the diagonal; it has a finite, non-zero limit. Hence the second factorial moment for small $\rho$ behaves like a constant times the square of the area of ${\cal B}(\rho)$, i.e. a constant times $\rho^{4}$.

It is theoretically possible to compute the behaviour of the higher correlation functions $K_n$ near the diagonal i.e. when $z_i\to z_j$ for $i\ne j$, but seems extremely technically demanding. On the other hand, it is easy to believe, that $K_n$ should stay bounded. Considering it as a given and using the same argument as in the proof of Corollary \ref{prob012}, we obtain the following conjecture.

\begin{conjecture}
\label{uno}
For $n>2$ and $\rho\to 0$ we have the following estimate of the factorial moment
$$
\mathbb{E}[ {\cal N}^c_{\rho} ({\cal N}^c_{\rho}-1) \cdots ({\cal N}^c_{\rho}-(n-1))]=O(\rho^{2n})
$$
and, correspondingly, on the probability to have exactly $n$ points in a small ball
$$
\P({\cal N}^c_{\rho}=n)=O(\rho^{2n}).
$$
\end{conjecture}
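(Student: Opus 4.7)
The plan is to apply the Kac-Rice formula \eqref{KRn} for the $n$-th factorial moment,
$$
\E\brb{ {\cal N}^c_{\rho} ({\cal N}^c_{\rho}-1) \cdots ({\cal N}^c_{\rho}-(n-1))} = \idotsint_{{\cal B}(\rho)^n} K_n({\bf z}) \, d{\bf z},
$$
and reduce the problem to showing that $K_n({\bf z})$ is bounded by a constant (depending only on $n$) as ${\bf z}$ ranges over ${\cal B}(\rho)^n$. Once that is established, the first assertion follows at once from $|{\cal B}(\rho)^n|=(\pi\rho^2)^n = O(\rho^{2n})$, and the probability statement follows from the elementary inequality
$$
n!\,\P({\cal N}^c_{\rho}=n) \le \E\brb{{\cal N}^c_{\rho}({\cal N}^c_{\rho}-1)\cdots({\cal N}^c_{\rho}-(n-1))},
$$
exactly as in the proof of Corollary \ref{prob012}.

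The heart of the argument is therefore the uniform boundedness of $K_n$. Away from the diagonals $\{z_i = z_j\}$ the covariance matrix of $(\nabla\Psi(z_1),\dots,\nabla\Psi(z_n))$ is non-degenerate, so $K_n$ is smooth and bounded on compact sets bounded away from the diagonals. The real difficulty is at the diagonals: the density $\phi_{(\nabla\Psi(z_i))_i}(0,\dots,0)$ diverges as some $z_i \to z_j$, and this blow-up must be cancelled by the vanishing of the conditional expectation of $\prod_i|\det H_{\Psi}(z_i)|$. The strategy to see this cancellation is to replace the singular conditioning $\nabla\Psi(z_1)=\cdots=\nabla\Psi(z_n)=0$ by an equivalent one built from $\nabla\Psi$ at a single base point together with a complete system of divided differences (of first and, where three or more points coalesce, higher orders). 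The Jacobian of this change of variables supplies a factor which exactly cancels the divergence of $\phi$, while the Taylor expansion of $\nabla\Psi$ forces each coalescing pair to contribute a factor of order $|z_i - z_j|$ to $|\det H_\Psi(z_i)|$, since an eigenvalue of $H_\Psi(z_i)$ along the coalescing direction is constrained to be of that order. This is precisely the mechanism already implicit in the $n=2$ computation in the body of the paper, where $K_2$ is shown to have a finite, non-zero limit on the diagonal.

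The main obstacle will be carrying out this bookkeeping uniformly across the combinatorial structure of how subsets of $\{z_1,\dots,z_n\}$ may coalesce: pairs, triples, or even all $n$ points may collapse, at possibly different rates and in different geometric configurations. This will require an iterated, Hermite-type divided-difference formalism, together with verification that for each merging pattern the joint Gaussian law of $\nabla\Psi$, the Hessian $H_\Psi$, and the higher derivatives of $\Psi$ at a single common point, up to an order determined by the size of the cluster, is non-degenerate. I expect this non-degeneracy to reduce ultimately to the explicit Bessel-function structure of the covariance $\psi(z)=J_0(|z|)$ and to positive-definiteness of the associated derivative Gram matrix at the origin. Once these ingredients are in place, $K_n$ extends continuously to the closed polydisk $\overline{{\cal B}(\rho)}^n$ and is therefore bounded, completing the reduction.
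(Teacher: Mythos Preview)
The statement you are attempting to prove is a \emph{Conjecture} in the paper, not a theorem: the authors explicitly do not give a proof. Their entire discussion is that computing the behaviour of $K_n$ near the diagonals ``seems extremely technically demanding,'' but that it is ``easy to believe that $K_n$ should stay bounded''; taking this boundedness as given, the factorial-moment bound follows immediately from the Kac--Rice formula \eqref{KRn} and the argument of Corollary~\ref{prob012}. So the paper's ``proof'' is exactly your first and last paragraphs, with the boundedness of $K_n$ left as an assumption.

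Your proposal goes further than the paper in that you sketch a strategy to actually establish the boundedness of $K_n$: replace the degenerate conditioning by divided differences, match the Jacobian against the blow-up of the density, and use Taylor expansion to extract the compensating vanishing from the Hessian determinants. This is indeed the natural approach and is consistent with how the $n=2$ case works out in Section~\ref{fact}. However, as written it is a plan rather than a proof. The two points you flag as ``obstacles'' are genuine gaps: (i) the non-degeneracy of the joint law of $(\nabla\Psi,\nabla^2\Psi,\dots,\nabla^k\Psi)$ at a single point, for all $k$ arising from the cluster sizes, is asserted but not verified (for $\Psi$ satisfying the Helmholtz equation there are linear relations among derivatives, so the relevant Gram matrix is \emph{not} non-degenerate in the naive sense and one must work modulo these relations); and (ii) the claim that each coalescing pair contributes a factor $|z_i-z_j|$ to $|\det H_\Psi(z_i)|$ is correct for two points but does not by itself give the right power-counting when three or more points merge---one needs to track higher-order vanishing, and the combinatorics of matching powers across an arbitrary coalescence stratification is precisely the ``extremely technically demanding'' part the authors declined to attempt. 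In short, your outline is reasonable and aligned with the paper's heuristic, but it does not close the gap that makes this a conjecture rather than a theorem.
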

Be believe that this estimate holds, but we know that it is not sharp since already for $n=3$ we have  $\P({\cal N}^c_{\rho}=3)=O(\rho^{7}\log(1/\rho))=o(\rho^6)$ (see \eqref{eq:three point probability}).

\subsection{Acknowledgements}
The research leading to these results has received funding from the European
Research Council under the European Unions Seventh Framework Programme (FP7/2007-2013) / ERC grant
agreements no 335141 (I.W. and V.C.) and Engineering \& Physical Sciences Research Council (EPSRC) Fellowship EP/M002896/1 (D.B.). We are grateful to Peter Sarnak for some stimulating discussions especially with regards to the proof of Theorem \ref{sfm}, and to Robert Adler, Anne Estrade, and Mikhail Sodin for useful conversations.

\section{Expected number of critical points}\label{exp}

\subsection{On the Kac-Rice formula for computing the expected number of critical points}
The Kac-Rice formula is a standard tool for studying the expected number of zeros of a process (see e.g. \cite[Theorem 11.2.1]{Adler-Taylor} or \cite[Theorem 6.8]{AzWsch}), and its higher moments by expressing the $n$-th (factorial) moment in terms of an $n$-dimensional integral. We apply Kac-Rice formula in this section to compute the expected value of ${\cal N}_{\rho}^c$.

Counting the critical points of $\Psi$ in the ball ${\cal B}(\rho)$ is equivalent to counting the
zeros of the map ${\cal B}( \rho) \to {\mathbb R}^2$ given by $z \to \nabla \Psi(z)$. One defines the zero density
$K_1: {\cal B}(\rho) \to \mathbb{R}$ of $\Psi$ as
$$K_1(z)=\phi_{\nabla \Psi(z)} (0,0) \cdot \mathbb{E}[|\mathrm{det}\, H_{\Psi}(z)| \big| \nabla \Psi(z)=0],$$
where $\phi_{\nabla \Psi(z)}$ is the Gaussian probability density of 2-dimensional vector $\nabla \Psi(x) \in \mathbb{R}^2$
evaluated at $(0,0)$, and $H_{\Psi}(z)$ is the Hessian matrix of $\Psi$ at $z$. By the Kac-Rice formula, if $\nabla \Psi(z)$ is nonsingular for all $z \in {\cal B}(\rho)$, then
\begin{align} \label{KRe}
\mathbb{E}[{\cal N}^c_{\rho}]=\int_{\cal{B}(\rho)} K_1(z) d z.
\end{align}

\subsection{Proof of Proposition \ref{expp}}

We first observe that in our case the zero density $K_1$ is independent of $z$ because $\Psi$ is isotropic; hence the Kac-Rice formula \eqref{KRe} sates that
\begin{align} \label{cont3}
\mathbb{E}[{\cal N}^c_{\rho}]=\pi \rho^2 K_1.
\end{align}
Moreover, as we are dealing with a smooth Gaussian field, it is possible to write an analytic expressions for $K_1$ in terms of the covariance function $\psi$ and its derivatives;  to derive such analytic expression we evaluate the covariance matrix $\Sigma$ of the $5$-dimensional centred jointly Gaussian vector
$$(\nabla \Psi(z), \nabla^2 \Psi(z))$$
where $ \nabla^2 \Psi(z)$ is the vectorized Hessian evaluated at $z$, that is a vector $$(\partial^2_{z_1,z_1}\Psi(z),\partial^2_{z_1,z_2}\Psi(z),\partial^2_{z_2,z_2}\Psi(z)).$$ The covariance matrix $\Sigma$ of $(\nabla \Psi(z), \nabla^2 \Psi(z))$ is evaluated in Appendix \ref{cov-exp} and has the form
$$\Sigma=\left(\begin{array}{ccc}
A  & B\\
 B^t& C
\end{array} \right),$$
where
\begin{align} \label{matrices}
A= \left(\begin{array}{cc}
\frac{1}{2} &0 \\
0&\frac{1}{2}
\end{array} \right), \hspace{1.5cm} B=0, \hspace{1.5cm}  C= \left( \begin{array}{ccc}
\frac{3 }{8} &0& \frac{1}{8} \\
0&\frac{1}{8}&0\\
\frac{1}{8}&0&\frac{3}{8}
\end{array}\right).
\end{align}
From $A$ we immediately obtain the probability density of the $2$-dimensional vector $\nabla \Psi(z)$ evaluated at $(0,0)$:
\begin{align} \label{cont2}
\phi_{\nabla \Psi(x)} (0,0)=\frac{1}{2 \pi \sqrt{ 1/4} }=\frac{1}{\pi},
\end{align}
in addition, since the first and the second order derivatives of $\Psi$ are independent at every fixed point $z \in \mathbb{R}^2$, we have that
$$
\mathbb{E}[|\mathrm{det} H_{\Psi}(x)| \big| \nabla \Psi(x)=0]=\mathbb{E}[|\mathrm{det} H_{\Psi}(x)|].
$$
From the covariance matrix $C$ of $\nabla^2 \Psi(z)$ in \eqref{matrices} we immediately see that
\begin{align} \label{ev}
\mathbb{E}[|\mathrm{det} H_{\Psi}(x)|]= \frac{1}{8} \mathbb{E}[|Y_1 Y_3 -Y_2^2|],
\end{align}
where $Y=(Y_{1},Y_{2},Y_{3})$ is a centred jointly Gaussian random vector with covariance matrix
\begin{equation*}
C_1=\left(
\begin{array}{ccc}
3 & 0 & 1 \\
0 & 1 & 0 \\
1 & 0 & 3
\end{array}
\right).
\end{equation*}
To evaluate \eqref{ev} we introduce the transformation $W_1=Y_1$, $W_2=Y_2$, $W_3=Y_1+Y_3$, and we write $\mathbb{E}[|Y_1 Y_3 -Y_2^2|]$ in terms of a conditional expectation as follows
\begin{align} \label{ccc}
 \mathbb{E}[|Y_1 Y_3 -Y_2^2|]= \mathbb{E}_{W_3} [   \;  \mathbb{E}[|W_1 W_3-W_1^2 - W_2^2 | \big| W_3=t  ] \;  ];
\end{align}
to evaluate the conditional expectation in \eqref{ccc} we follow the argument in the proof of
\cite[Proposition 1.1]{CaMaWi}, i.e. we note that
\begin{align*}
 \mathbb{E}[|W_1 W_3-W_1^2 - W_2^2 |  \big| W_3=t]&
=\mathbb{E}\left[ \left. \left\vert   W_{1} \, t -W_{1}^{2}-W_{2}^{2}%
\right\vert \right\vert W_{3}= t \right]  \\
&=\mathbb{E}\left[ \left\vert   (Z_{1}+ {t}/{2})\, t -(Z_{1}+ {t}/{2})^{2}-Z_{2}^{2}\right\vert \right]  \\
&=\mathbb{E}\left[ \left\vert -Z_{1}^{2}-Z_{2}^{2}+2t^{2}\right\vert \right]=\mathbb{E} \left[   \left|  - X+\frac{t^2}{4}  \right|  \right],
\end{align*}
where $Z_1, Z_2$ are independent standard Gaussian and $X$ is a $\chi$-squared random variable with density
\begin{align*}
f_X(x)=\frac{1}{2}e^{-\frac{x}{2}}, \hspace{2cm} x >0.
\end{align*}
It follows that
\begin{align*}
\mathbb{E} \left[   \left| \frac{t^2}{4} - X  \right|  \right] = -2+4 e^{-\frac{t^2}{8}}+\frac{t^2}{4},
\end{align*}
and
\begin{align} \label{cont1}
\mathbb{E}[|Y_1 Y_3 -Y_2^2|]=\frac{1}{4 \sqrt \pi} \int_{\mathbb{R}} e^{-\frac{t^2}{16}} \left( -2+4 e^{-\frac{t^2}{8}}+\frac{t^2}{4} \right) d t = \frac{2^2}{\sqrt 3 }.
\end{align}
The statement follows combining \eqref{cont3}, \eqref{cont2},  \eqref{cont1}, and observing that
\begin{align*}
\mathbb{E}[{\cal N}^c_{\rho}(\Psi)]=\pi \rho^2 \cdot \frac{1}{\pi} \cdot   \frac{1}{8} \frac{2^2}{\sqrt 3 }= \frac{1}{2 \sqrt 3 }\cdot \rho^2.
\end{align*}

\section{Second factorial moment} \label{fact}

\subsection{On the Kac-Rice formula for computing the second factorial moment of the number of critical points}

We will find an explicit expression for the $2$-point correlation function $K_2 : {\cal B}(\rho) \times {\cal B}(\rho) \to \mathbb{R}$, defined as the conditional Gaussian expectation
\begin{align*}
K_2(z,w)=\phi_{(\nabla \Psi(z),\nabla \Psi(w))} (0, 0) \cdot \mathbb{E}[ |\mathrm{det} H_{\Psi}(z)| \cdot |\mathrm{det} H_{\Psi}(w)| \big| \nabla \Psi(z)=\nabla \Psi(w)=0],
\end{align*}
 in terms of the covariance function $\psi$ and its derivatives. Finding such an expression involves studying the centred Gaussian vector
\begin{equation} \label{repp}
(\nabla \Psi(z),\nabla \Psi(w),\nabla^2 \Psi(z),\nabla^2 \Psi(w))
\end{equation}
with covariance matrix ${\bf \Sigma}(z,w)$, $z,w \in  {\cal B}(\rho)$. It is known \cite[Theorem 6.9]{AzWsch} that, if for all $z \ne w$ the Gaussian distribution of $(\nabla \Psi(z),\nabla \Psi(w))$ is non-degenerate,  the second factorial moment
of the number of critical points in ${\cal B}(\rho)$ can be expressed as
\begin{align} \label{18:03}
\mathbb{E}[{\cal N}^c_{\rho} \;  ({\cal N}^c_{\rho}-1)]=\iint_{\cal{B}(\rho) \times \cal{B}(\rho)} K_2(z,w) \; d z \, d w.
\end{align}
We note that $K_2$ is everywhere nonnegative.

\subsection{Proof of Theorem \ref{sfm}}
\label{subsec: proof of theorem}
In order to study the asymptotic behaviour of the second factorial moment of the number of critical points in ${\cal B}(\rho)$, as the radius $\rho$ of the disk goes to zero,  we need to study the centred Gaussian random vector \eqref{repp}.
Its covariance matrix ${\bf \Sigma}={\bf \Sigma}(z,w)$ is of the form
$${\bf \Sigma}=\left(\begin{array}{ccc}
{\bf A}  & {\bf B}\\
 {\bf B}^t& {\bf C}
\end{array} \right),$$
where
${\bf A}={\bf A}(z,w)$ is the covariance matrix of the gradients $(\nabla \Psi(z),\nabla \Psi(w))$, ${\bf C}={\bf C}(z,w)$ is the covariance matrix of the second order derivatives $(\nabla^2 \Psi(z),\nabla^2 \Psi(w))$ and ${\bf B}={\bf B}(z,w)$ is the covariance matrix of the first and second order derivatives.

The function $\Psi$ is isotropic, hence, the critical point process is also invariant w.r.t. translations and rotations. This means that its $2$-point function $K_2(z,w)$ depends on $|z-w|$ only (this is not true for covariance matrix $\bf \Sigma$); by the standard abuse of notation we write
\begin{equation}
\label{eq:K2=K2(d)}
K_2(z,w)=K_2(|z-w|).
\end{equation}
We will compute $K_2(z,w)$ for $z=(0,0)$ and $w=(0,r)$, which, thanks to the by-product \eqref{eq:K2=K2(d)} of the isotropic property of $\Psi$, this will give us $K_2(r)$.

From now on we will work only with ${\bf \Sigma}(r)$ and ${\bf \Delta}(r)$ which we {\em define} as ${\bf \Sigma}(z,w)$ and ${\bf \Delta}(z,w)$ with $z$ and $w$ as above. In Appendix \ref{ucn} we evaluate the entries of ${\bf \Sigma}(z,w)$ under this assumption, and in Appendix \ref{matrixdelta} we evaluate the covariance matrix ${\bf \Delta}={\bf \Delta}(z,w)$ of $(\nabla^2 \Psi(z),\nabla^2 \Psi(w))$ conditioned on $\nabla \Psi(z)=\nabla \Psi(w)=0$, i.e.,
\begin{align*}
{\bf \Delta}={\bf C} - {\bf B}^t {\bf A}^{-1} {\bf B}.
\end{align*}

As we discussed above, the two-point function is given by
\begin{equation}
\label{k2}
\begin{aligned}
K_2(r)&=\frac{1}{(2 \pi)^2 \sqrt{\text{det}({\bf A}(r))}} \\
&\times \int_{\mathbb{R}^6 }|\zeta_{1} \zeta_{3} - \zeta^2_{2}| \cdot |\zeta_{4} \zeta_{6} - \zeta^2_{5}| \frac{1}{(2 \pi)^3} \frac{1}{\sqrt{\text{det}({\bf \Delta}(r))}} \exp \left\{-\frac{1}{2} \zeta^t {\bf \Delta}^{-1}(r) \zeta \right\} d \zeta,
\end{aligned}
\end{equation}
where $\zeta=(\zeta_1, \zeta_2, \zeta_3, \zeta_4, \zeta_5, \zeta_6)$ is a vector in $\mathbb{R}^6$.
Indeed, the density of $(\nabla \Psi(0,0),\nabla\Psi(0,r))$ at zero is given by $(2\pi)^{-2}(\det({\bf A}(r)))^{-1/2}$, and the integral gives the expectation of $|\mathrm{det} H_{\Psi}(z)| \cdot |\mathrm{det} H_{\Psi}(w)|$ with respect to the Gaussian measure of $(\nabla^2 \Psi(z),\nabla^2\Psi^2(w))$ conditioned on  $\nabla \Psi(z)=\nabla \Psi(w)=0$, that is, having covariance ${\bf \Delta}(r)$.

Our aim is to study the asymptotic behaviour of the $2$-point correlation function $K_2$ in the {\em vicinity} of $r=0$.

For every strictly positive $r$, ${\bf \Delta}(r)$ is symmetric, hence we may diagonalise it with an orthogonal $P(r)$:
\begin{align} \label{18:07}
{\bf \Delta}(r)=P^{-1}(r) \Lambda(r) P(r)=P^t(r) \Lambda(r) P(r),
\end{align}
where the matrix $\Lambda(r)$ is diagonal, with eigenvalues $\lambda_i(r)$, $i=1, \dots,6$, and $P(r)$ is the orthogonal matrix with row vectors the normalized eigenvectors of ${\bf \Delta}(r)$. The analytic expressions of the eigenvalues and eigenvectors of ${\bf \Delta}(r)$, $r>0$, are computed in  Lemma \ref{eigenv} and Lemma \ref{eigenvect} respectively.
In Lemma \ref{eigenvT} and Lemma \ref{lem: expansion for Q} we compute their  Taylor expansion around $r=0$. We prove these lemmas with the aid of Mathematica since the calculations are technically demanding. We stress that all the computations performed with Mathematica are symbolic.
\\

Equation \eqref{18:07} implies that we can write
\begin{equation}
\label{13:05}
\begin{aligned}
&\frac{1}{\sqrt{\text{det}({\bf \Delta}(r))}}\exp \Big\{ -\frac{1}{2} \zeta^t {\bf \Delta}^{-1}(r) \zeta \Big\}&
\\
&=
\frac{1}{\sqrt{\prod_{i=1}^6 \lambda_i(r)}}\exp \Big\{ -\frac{1}{2} \zeta^t P^{-1}(r) \Lambda^{-1}(r) P(r) \zeta \Big\}
\\
&=\frac{1}{\sqrt{\prod_{i=1}^6 \lambda_i(r)}}\exp\Big\{ -\frac{1}{2} (\Lambda^{-1/2}(r)P(r) \zeta)^t  (\Lambda^{-1/2}(r)P(r) \zeta) \Big\} .
\end{aligned}
\end{equation}

This suggests to introduce a new variable $\xi=\Lambda^{-1/2}(r) P(r) \zeta$. Clearly, we can express $\zeta$ in terms of $\xi$ as
\begin{equation}
\label{eq: zeta}
\zeta=P^{-1}(r)\Lambda^{1/2}(r)\xi=P^{t}(r)\Lambda^{1/2}(r)\xi
\end{equation}
With this change of variables
$$
\frac{1}{\sqrt{\text{det}({\bf \Delta}(r))}}\exp \Big\{ -\frac{1}{2} \zeta^t {\bf \Delta}^{-1}(r) \zeta \Big\}d \zeta= e^{-|\xi|^2/2}d\xi.
$$
Using \eqref{eq: zeta}, we can write components $\zeta_i$  as
$$
\zeta_i=\sum_{j=1}^6 (Q(r))_{ij}\; \sqrt{\lambda_j(r)} \; \xi_j=\sum_{j=1}^6 q_{ij}(r)\; \sqrt{\lambda_j(r)} \; \xi_j,
$$
where the $q_{i j}(r)$ are the elements of $Q(r)=P^{-1}(r)=P^{t}(r)$. The columns of $Q$ form an orthonormal basis of eigenvectors of eigenvectors of ${\bf \Delta}(r)$.
With this change of variables  we can rewrite the two quadratic forms $\zeta_{1} \zeta_{3}- \zeta_{2}^2$ and $\zeta_{4} \zeta_{6}- \zeta_{5}^2$ in  \eqref{k2} as
\begin{align*}
\zeta_{1} \zeta_{3} - \zeta^2_{2} &
=\br{\sum_{j=1}^6 q_{1 j}(r) \sqrt{\lambda_{j}(r) } \; \xi_j}
 \br{\sum_{j=1}^6 q_{3 j}(r) \sqrt{\lambda_{j}(r) } \; \xi_j} - \br{\sum_{j=1}^6 q_{2 j}(r) \sqrt{\lambda_{j}(r) } \; \xi_j }^2,
\\
\zeta_{4} \zeta_{6} - \zeta_{5}^2 &
= \br{\sum_{j=1}^6 q_{4 j}(r) \sqrt{\lambda_{j}(r) } \; \xi_j}
\br{\sum_{j=1}^6 q_{6 j}(r) \sqrt{\lambda_{j}(r) } \; \xi_j} - \br{\sum_{j=1}^6 q_{5 j}(r) \sqrt{\lambda_{j}(r) } \; \xi_j }^2.
\end{align*}
Summing it all up, the $2$-point correlation function $K_2$ in \eqref{k2} in $\xi$ coordinates becomes
\begin{equation}
\label{k2ch}
\begin{aligned}
&K_2(r)= \frac{1}{ (2\pi)^5 \sqrt{ \text{det}({\bf A}(r)) }} \\
&\;\; \times
\bigints\limits_{\R^6}
\left|
\br{\sum_{j=1}^6 q_{1 j}(r) \sqrt{\lambda_{j}(r) } \; \xi_j}
\br{\sum_{j=1}^6 q_{3 j}(r) \sqrt{\lambda_{j}(r) } \; \xi_j} - \br{\sum_{j=1}^6 q_{2 j}(r) \sqrt{\lambda_{j}(r) } \; \xi_j }^2
\right|
\\
&\;\; \times
\left|
\br{\sum_{j=1}^6  q_{4 j}(r) \sqrt{\lambda_{j}(r) } \; \xi_j}
\br{\sum_{j=1}^6  q_{6 j}(r) \sqrt{\lambda_{j}(r) } \; \xi_j} -
\br{\sum_{j=1}^6  q_{5 j}(r) \sqrt{\lambda_{j}(r) } \; \xi_j}^2
\right|
\\
& \;\; \times \exp \Big\{-\frac{1}{2} \sum_{i=1}^6 \xi^2_i \Big\} d \xi.
\end{aligned}
\end{equation}

To obtain the asymptotic behaviour {\em around} $r=0$ of the integral in \eqref{k2ch}, we Taylor expand around the origin the entries $q_{i j}$ of the  matrix $Q$ and eigenvalues $\lambda_j$.  Such Taylor expansions up to $O(r^4)$ are given by equations \eqref{eq: expansion for Q} and \eqref{eq:lambda series}. Combining these expansions and noting that the first two factors in the integrand are polynomials of degree $4$ in terms of $\xi$ we obtain the following expansion:
\begin{align*}
&\Big[ \sum_{j} q_{1 j}(r) \sqrt{\lambda_{j}(r) } \; \xi_j \sum_{j} q_{3 j}(r) \sqrt{\lambda_{j}(r) } \; \xi_j - \Big(\sum_{j} q_{2 j}(r) \sqrt{\lambda_{j}(r) } \; \xi_j \Big)^2 \Big] \\
&\;\; \times \Big[ \sum_{j} q_{4 j}(r) \sqrt{\lambda_{j}(r) } \; \xi_j \sum_{j} q_{6 j}(r) \sqrt{\lambda_{j}(r) } \; \xi_j - \Big(\sum_{j} q_{5 j}(r) \sqrt{\lambda_{j}(r) } \; \xi_j \Big)^2 \Big]\\
&= - \frac{1}{2^7 3} \xi_4^2 \xi_6^2 r^2 +   (1+ ||\xi||^4)\; O(r^4 ),
\end{align*}
and then
\begin{align} \label{14:00}
K_2(r)
&= \frac{1}{ (2\pi)^5 \sqrt{ \text{det}({\bf A}(r)) }} \left[ \frac{r^2}{2^7 3} \int_{\mathbb{R}^6}
\xi_4^2 \xi_6^2 \times \exp \left\{-\frac{1}{2} \sum_{i=1}^6 \xi^2_i \right\} d \xi+ O(r^4) \right].
\end{align}\\
In the Gaussian integral variables separate and it is a product of standard one-dimensional integrals. Each of them is equal to $\sqrt{2\pi}$ and the entire integral is $(2\pi)^3$.
Matrix $\bf A$ has a simple block structure and it is easy to compute its determinant. Explicit computation in Appendix \ref{ucn} (see equation \eqref{14:25}) gives
$$
\det(A)=\frac{3r^4}{2^8}+O(r^6).
$$
Combining this asymptotic with \eqref{14:00}, we finally obtain that, as $r \to 0$,
\begin{align*}
K_2(r)=\frac{1}{2^5 3 \sqrt 3 \pi^2}+O(r^2),
\end{align*}
and, in view of \eqref{18:03}, as $\rho \to 0$,
\begin{align*}
\mathbb{E}[{\cal N}^c_{\rho} \; ({\cal N}^c_{\rho}-1)]
= \frac{1}{2^5 3 \sqrt 3 \pi^2} \pi^2 \rho^4+O(\rho^6)=\frac{1}{2^5 3 \sqrt 3 }  \rho^4+O(\rho^6).
\end{align*}

To prove the second part of Theorem \ref{sfm} we need to evaluate the two-point correlation function $K_2$ modified for the respective types of critical points. The modified function $K_{2}$ has the same expression \eqref{k2} with the integration over a proper subset of $\R^6$, i.e.
the $\zeta$ with the corresponding critical points of the prescribed types.

To be more precise, let us define two Hessians at points $z$ and $w$ (already conditioned to be are critical points). In terms of $\zeta_i$ these Hessians are given by
\begin{equation*}
H_1=
\begin{pmatrix}
\zeta_1 & \zeta_2 \\
\zeta_2 & \zeta_3
\end{pmatrix},
\quad  \mathrm{and} \quad
H_2=
\begin{pmatrix}
\zeta_4 & \zeta_5 \\
\zeta_5 & \zeta_6
\end{pmatrix}
\end{equation*}
The characteristic polynomials for these matrices are $$
x^2+b_1x+c_1=x^2-(\zeta_1+\zeta_3)x+\zeta_1\zeta_3-\zeta_2^2
$$
and
$$
x^2+b_2x+c_2=x^2-(\zeta_4+\zeta_6)x+\zeta_4\zeta_6-\zeta_5^2.
$$
The particular type of a critical point depends on the eigenvalues of its Hessian: they are both negative for the local maxima, negative for the local minima, and of different signs for the saddles. We may reformulate these dependencies in terms of the coefficients $b_i=-\mathrm{ Tr}\, H_i$ and $c_i=\det(H_i)$: a critical point with Hessian $H_i$ is a minimum if $c_i>0$ and $b_i<0$, a maximum if $c_i>0$ and $b_i>0$, and a saddle if $c_i<0$ (we may ignore the special probability $0$ cases when one of the eigenvalues vanishes).

As before, we rewrite
$\zeta_i$ in terms of of $\xi_i$. This gives the coefficients of the polynomials as functions of $\xi_i$ and $r$. Expanding in powers of $r$ we get
\begin{equation}
\label{eq:b1,c1,b2,c2 as on r}
\begin{aligned}
b_1&=-\frac{\xi_6}{\sqrt{3}}+\br{\frac{\xi_6}{144\sqrt{3}}
-\frac{\xi_5}{96\sqrt{2}}}r^2+O(r^3)=b_{1,0}+b_{1,2}r^2+O(r^3)
\\
c_1&=-\frac{\xi_4\xi_6}{8\sqrt{6}}r+
\frac{-9\xi_1^2-9\xi_4^2+2\sqrt{6}\xi_6\xi_5+4\xi_6^2}{2^7 3^2}r^2+O(r^3)
=c_{1,1}r+c_{1,2}r^2+O(r^3)
\\
b_2&=-\frac{\xi_6}{\sqrt{3}}+\br{\frac{\xi_6}{144\sqrt{3}}
-\frac{\xi_5}{96\sqrt{2}}}r^2+O(r^3)=b_{2,0}+b_{2,2}r^2+O(r^3)\\
c_2&=\frac{\xi_4\xi_6}{8\sqrt{6}}r+
\frac{-9\xi_1^2-9\xi_4^2+2\sqrt{6}\xi_6\xi_5+4\xi_6^2}{2^7 3^2}r^2+O(r^3)
=c_{2,1}r+c_{2,2}r^2+O(r^3).
\end{aligned}
\end{equation}
We observe the following: all of the coefficients $b_{i,j}$ are linear functions of $\xi$, and all of the coefficients 
$c_{i,j}$ are quadratic forms. We also notice that
$$
b_{1,0}=b_{2,0}, \ \
b_{1,2}=b_{2,2}, \  \
c_{1,1}=-c_{2,1}, \  \
c_{1,2}=c_{2,2}.
$$

Since all the expressions we deal with are homogeneous functions of various degrees, it is natural to work in spherical coordinates. We introduce $s_i=\xi_1/|\xi|$ and rescale $b_i$ by $|\xi|$ and $c_i$ by $|\xi|^2$. Abusing notation we denote the rescaled coefficients $b_i$, $b_{i,j}$, $c_i$, and $c_{i,j}$ that are now functions of $s_i$ instead of $\xi_i$ by the same letters; there is no confusion since from now on all expressions will be in terms of $|\xi|\in (0,\infty)$ and $s=(s_1,\dots,s_6)\in S^5$.
With this notation, the formula \eqref{k2ch} for $K_2$ becomes
\begin{equation}
\label{eq:spherical coordinates}
\begin{aligned}
K_{2}(r)&= \frac{1}{ (2\pi)^5 \sqrt{ \text{det}({\bf A}(r)) }}
 \int_{\R^6} |\xi|^4 |c_1 c_2| e^{-|\xi|^2/2}d\xi
 \\
& =
  \frac{1}{ (2\pi)^5 \sqrt{ \text{det}({\bf A}(r)) }}
\int_0^\infty |\xi|^9 e^{-|\xi|^2/2} d |\xi|
\int_{S^5}|c_1(s) c_2(s)| ds
\\&= \frac{12}{ \pi^5 \sqrt{ \text{det}({\bf A}(r)) }}
\int_{S^5}|c_1(s) c_2(s)| ds,
\end{aligned}
\end{equation}
where $ds$ is the spherical volume element on the unit sphere $S^5$, and we evaluated the standard Gaussian integral
$$\int_0^\infty |\xi|^9 e^{-|\xi|^2/2} d |\xi| = 2^{7}\cdot 3.$$

\bigskip

\noindent{\bf Minimum-minimum two point function. }

The the two-point correlation function $K_2^{min,min}(r)$ corresponding to the local minima
is given by \eqref{eq:spherical coordinates} except that we replace the domain of the integration $S^5$, by
$$
S_{min,min}=\{s\in S^5: c_1>0, c_2 >0, b_1<0, b_2<0\},
$$
the set of $s$ such that both Hessians correspond to local minima.
If $|s_4 s_6|>\c r$ for sufficiently large $\c$, then $c_1$ and $c_2$ are of opposite signs (for the rest of this section we use  $\c$ to denote all absolute constants). This implies that $S_{min,min}$ is a subset of $\{s:|s_4 s_6|<\c r\}$ for some $\c$ sufficiently big. It is easy to see that on this set $|c_i|=O(r^2)$, thus
$$
\int_{S_{min,min}}|c_1(s) c_2(s)| ds
\le
\int_{\{s:|s_4 s_6|<\c r\}}|c_1(s) c_2(s)| ds
\le
O(r^4)\int_{\{s:|s_4 s_6|<\c r\}}ds
=
O(r^5\log(1/r)).
$$

That yields $K_2^{min,min}(r)
=O(r^3\log(1/r))$ via \eqref{eq:spherical coordinates}, where $r^2$ cancelled out with $\sqrt{\det({\bf A})}$). Integrating this estimate over $\cal{B}(\rho)\times \cal{B}(\rho)$ we obtain an estimate of the second factorial moment:
$$
\E\brb{{\cal N}_\rho^{min}({\cal N}_\rho^{min}-1)}=O(\rho^7\log(1/\rho)).
$$
The other estimate of \eqref{eq:max-max moment} follows from symmetry considerations.

\bigskip

\noindent{\bf Minimum-maximum two point function. }

In the similar way, we have to estimate the integral over $S_{min,max}$, the set where one point is a minimum and the other is a maximum.  This set is given by conditions that both $c_i$ are positive and $b_1$ and $b_2$ are of different signs.
First, the same argument as above forces $|s_4 s_6|<\c r$ for some large constant $\c$. If $|s_6|>\c r^2$ for a large constant $\c$, then the leading terms in the formulas \eqref{eq:b1,c1,b2,c2 as on r} for $b_i$ dominate and $b_1$ and $b_2$ are of the same sign, contradicting our assumption. Hence this implies that $|s_6|<\c r^2$, and under this assumption both $b_i$ are of the form
$$
-\frac{s_6}{\sqrt{3}}-\frac{s_5 r^2}{96\sqrt{2}}+O(r^3).
$$
Again, since $b_i$ should be of different signs, it forces the term corresponding to $O(r^3)$ to dominate, that is
$$
L(s_5,s_5)=\left|-\frac{s_6}{\sqrt{3}}-\frac{s_5 r^2}{96\sqrt{2}}\right|\le \c r^3,
$$
for some big constant $\c$. Notice that this condition is stronger than the previous condition that $|s_6|<\c r^2$.

Substituting the estimate $|s_6|<\c r^2$ into the formulas \eqref{eq:b1,c1,b2,c2 as on r} for $c_1$ and $c_2$ we see that they both are equal to
$$
-\frac{9}{2^7 3^2}(s_1^2+s_4^2)r^2+O(r^3).
$$
It then follows that $(s_1^2+s_4^2)$ is bounded by $\c r$ (for large $\c$), as otherwise both $c_1$ and $c_2$ are negative; also under this condition both $c_i$ are $O(r^3)$.
Combining all of this we get the estimate
$$
\int\limits_{S_{min,max}}|c_1(s) c_2(s)| ds
\le
\int\limits_{\substack{(s_1^2+s_4^2)<\c r \\ L(s_5,s_5)<\c r^3 }}|c_1(s) c_2(s)| ds
=
O(r^6)\int\limits_{\substack{(s_1^2+s_4^2)<\c r \\ L(s_5,s_5)<\c r^3 }}ds=O(r^6)O(r^4)=O(r^{10}),
$$
and substituting this into \eqref{eq:spherical coordinates} and integrating $K_{2}$ the Kac-Rice formula yields
$$
\E\brb{\cal{N}_\rho^{min}\cal{N}_\rho^{max}}=O(\rho^{12}).
$$

\bigskip

\noindent{\bf Saddle-saddle and extremum-extremum two point functions. }

For two extrema or two saddle points both $c_i$ are forced to be of the same sign. 
The same argument as for the minimum-minimum case yields
$$
\E\brb{\cal{N}_\rho^{saddle}(\cal{N}_\rho^{saddle}-1)}= O(\rho^7\log(1/\rho)),
$$
and
$$
\E\brb{\cal{N}_\rho^{e}(\cal{N}_\rho^{e}-1)}= O(\rho^7\log(1/\rho)).
$$

\bigskip
\noindent{\bf Extremum-saddle two point function. }

Finally we notice that $\cal{N}=\cal{N}^{e}+\cal{N}^{saddle}$, and
$$
\cal{N}(\cal{N}-1)=\cal{N}^{e}(\cal{N}^{e}-1)+\cal{N}^{saddle}(\cal{N}^{saddle}-1)+2\cal{N}^{e}\cal{N}^{saddle}.
$$
Combining this formula with previous estimates we obtain
$$
\E\brb{\cal{N}_\rho^{e}\cal{N}_\rho^{saddle}}=\frac{1}{2}\E\brb{\cal{N}_\rho(\cal{N}_\rho-1)}+O(\rho^7\log(1/\rho))
=
\frac{1}{2^6 \, 3 \, \sqrt 3 } \; \rho^4+O(\rho^6).
$$
This completes the proof of Theorem \ref{sfm}.

\appendix

\section{\texorpdfstring{Eigenvalue and eigenvectors of ${\bf \Delta}(r)$, $r>0$}
{Eigenvalue and eigenvectors of Delta, r>0}
}

We introduce the notation
$$
{\bf \Delta} (r)=\left( \begin{array}{cc}  \Delta_1(r) &  \Delta_2(r) \\ \Delta_2(r) & \Delta_1(r)
\end{array} \right)
$$
where $\Delta_1$ and $\Delta_2$ are $3\times 3$ symmetric matrices with elements in terms of $a_i$, $i=1,\dots 8$
\begin{equation}
\label{eq:definition delta_i}
\begin{aligned}
 \Delta_1(r)
= \left( \begin{array}{ccc}
\frac{1}{3} +a_1(r)&0& a_4(r)\\
0&a_2(r)&0\\
a_4(r)&0&a_3(r)
\end{array}\right), \hspace{1cm}
\Delta_2(r)
= \left( \begin{array}{ccc}
\frac 1 3 + a_5(r)&0& a_8(r)\\
0&a_6(r)&0\\
a_8(r)&0&a_7(r)
\end{array}\right).
\end{aligned}
\end{equation}

\label{eigen}

\noindent We compute now the eigenvalues and eigenvectors of the matrix ${\bf \Delta}(r)$, $r>0$. We introduce the following notation:

\begin{align*}
A_1^{+}(r)&= a_1(r) + a_5(r)+\frac{2}{3}, \hspace{1cm} A_1^{-}(r)= a_1(r) - a_5(r),
\end{align*}
\begin{align*}
A_2^{\pm}(r)&= a_2(r) \pm a_6(r), \\
A_3^{\pm}(r)&= a_3(r) \pm a_7(r), \\
A_4^{\pm}(r)&= a_4(r) \pm a_8(r);
\end{align*}
with $a_i(r)$  defined above.

\begin{lemma} \label{eigenv}
For every $r >0$, the eigenvalues of the matrix ${\bf \Delta}(r)$ have the following explicit expressions:
\begin{equation}
\label{eq:lambda explicit}
\begin{aligned}
\lambda_1(r)&=A_2^{-}(r),\\
\lambda_2(r)&=A_2^{+}(r), \\
\lambda_3(r)&=\frac{1}{2} \Big[A_1^{-}(r) +A_3^{-}(r) - \sqrt{(A_1^{-}(r) - A_3^{-}(r))^2+4 A_4^{-}(r)^2} \Big], \\
\lambda_4(r)&=\frac{1}{2} \Big[A_1^{-}(r) +A_3^{-}(r) + \sqrt{(A_1^{-}(r) - A_3^{-}(r))^2+4 A_4^{-}(r)^2} \Big],\\
\lambda_5(r) &=\frac{1}{2} \Big[A_1^{+}(r) +A_3^{+}(r) - \sqrt{(A_1^{+}(r) - A_3^{+}(r))^2+4 A_4^{+}(r)^2} \Big], \\
\lambda_6(r)&=\frac{1}{2} \Big[A_1^{+}(r) +A_3^{+}(r) + \sqrt{(A_1^{+}(r) - A_3^{+}(r))^2+4 A_4^{+}(r)^2} \Big].
\end{aligned}
\end{equation}
\end{lemma}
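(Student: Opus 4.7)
The plan is to exploit the symmetric block structure ${\bf \Delta}(r)=\begin{pmatrix}\Delta_1 & \Delta_2\\ \Delta_2 & \Delta_1\end{pmatrix}$ to reduce the $6\times 6$ eigenvalue problem to two $3\times 3$ problems, each of which has a single decoupled row/column and hence reduces further to a $2\times 2$ eigenvalue computation.

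First I would apply the orthogonal similarity induced by
\[
U=\tfrac{1}{\sqrt{2}}\begin{pmatrix} I_3 & I_3 \\ I_3 & -I_3 \end{pmatrix},
\]
and verify by a direct $2\times 2$ block multiplication that
\[
U^{t}\,{\bf \Delta}(r)\,U=\begin{pmatrix}\Delta_1(r)+\Delta_2(r) & 0 \\ 0 & \Delta_1(r)-\Delta_2(r)\end{pmatrix}.
\]
Thus the spectrum of ${\bf \Delta}(r)$ is the disjoint union of the spectra of $\Delta_1(r)\pm\Delta_2(r)$. Using the definitions of the $a_i(r)$ given in \eqref{eq:definition delta_i} together with the abbreviations $A_i^{\pm}$, one computes immediately
\[
\Delta_1(r)+\Delta_2(r)=\begin{pmatrix} A_1^{+} & 0 & A_4^{+}\\ 0 & A_2^{+} & 0\\ A_4^{+} & 0 & A_3^{+}\end{pmatrix},\qquad
\Delta_1(r)-\Delta_2(r)=\begin{pmatrix} A_1^{-} & 0 & A_4^{-}\\ 0 & A_2^{-} & 0\\ A_4^{-} & 0 & A_3^{-}\end{pmatrix},
\]
where the constant $\tfrac{2}{3}$ is absorbed into $A_1^{+}$ by definition and cancels in $A_1^{-}$.

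Next I would observe that each of these $3\times 3$ matrices is ``arrowhead-like'': the second row and second column contain only the diagonal entry, so the standard basis vector $e_2=(0,1,0)^{t}$ is an eigenvector. This produces the eigenvalues $A_2^{+}$ and $A_2^{-}$ (yielding $\lambda_2(r)$ and $\lambda_1(r)$). The orthogonal complement is spanned by $e_1,e_3$, on which each matrix acts as the symmetric $2\times 2$ block $\begin{pmatrix}A_1^{\pm} & A_4^{\pm}\\ A_4^{\pm} & A_3^{\pm}\end{pmatrix}$. Applying the usual quadratic formula to the characteristic polynomial $x^{2}-(A_1^{\pm}+A_3^{\pm})x+A_1^{\pm}A_3^{\pm}-(A_4^{\pm})^{2}=0$ produces exactly the four remaining eigenvalues $\lambda_3,\lambda_4,\lambda_5,\lambda_6$ in the form asserted.

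There is no real obstacle here: the argument is entirely algebraic and the only thing to check carefully is that the block decomposition genuinely reduces to the stated arrowhead matrices, i.e.\ that the sparsity patterns of $\Delta_1$ and $\Delta_2$ are preserved under $\Delta_1\pm\Delta_2$ (which is immediate from \eqref{eq:definition delta_i}) and that the constant $\tfrac{1}{3}$ on the diagonal lands in $A_1^{+}$ rather than $A_1^{-}$. All other work is the elementary diagonalization of a real symmetric $2\times 2$ matrix.
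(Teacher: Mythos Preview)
Your proof is correct and follows essentially the same route as the paper: both reduce the $6\times 6$ problem to the two $3\times 3$ matrices $\Delta_1(r)\pm\Delta_2(r)$, then exploit the zero pattern to split off $A_2^{\pm}$ and solve the remaining $2\times 2$ quadratic. The only cosmetic difference is that the paper uses the block determinant identity $\det\!\left(\begin{smallmatrix}M&N\\N&M\end{smallmatrix}\right)=\det(M+N)\det(M-N)$ directly on the characteristic polynomial, whereas you achieve the same reduction via the orthogonal conjugation by $U$; your version has the minor bonus of making the eigenvector structure in Lemma~\ref{eigenvect} transparent as well.
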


\begin{proof}
We can compute explicitly the roots of
$$
\text{det}({\bf \Delta}(r)-\lambda I)=\text{det} \left( \begin{array}{cc} \Delta_1(r) - \lambda I & \Delta_2(r) \\ \Delta_2(r) & \Delta_1(r) -\lambda I
\end{array} \right),
$$
by observing that since $\Delta_i$ are square matrices, we have the following identity for the determinant of a block matrix
$$
\text{det} \left( \begin{array}{cc} \Delta_1(r) - \lambda I & \Delta_2(r) \\ \Delta_2(r) & \Delta_1(r) -\lambda I
\end{array} \right) = \text{det} (\Delta_1(r) - \lambda I - \Delta_2(r)) \, \text{det} (\Delta_1(r) - \lambda I + \Delta_2(r)).
$$
The matrices $\Delta_1(r) - \lambda I \pm \Delta_2(r)$ could be written in terms of $A_i^\pm$ as
\begin{align*}
\Delta_1(r) - \lambda I \pm \Delta_2(r)= \left(\begin{array}{ccc}
A_1^{\pm}(r)-\lambda & 0 & A_4^{\pm}(r) \\
0 & A_2^{\pm}(r) - \lambda &0 \\
A_4^{\pm}(r) & 0 & A_3^{\pm}(r)-\lambda
\end{array} \right).
\end{align*}
Since these matrices have many elements equal to zero, their determinants are particularly simple and could be factorized as
\begin{align*}
\text{det} (\Delta_1(r) - \lambda I \pm \Delta_2(r))= (A_2^{\pm}(r)-\lambda) [\lambda^2 - \lambda (A_1^{\pm}(r)+A_3^{\pm}(r))
+ A_1^{\pm}(r) A_3^{\pm}(r) - A_4^{\pm}(r)^2].
\end{align*}
The last factor is quadratic in terms of $\lambda$ and the roots could be found explicitly. They are equal to $\lambda_5$ and $\lambda_6$ in the ``$+$'' case and $\lambda_3$ and $\lambda_4$ in the ``$-$'' case.
\end{proof}

This lemma expresses the eigenvalues of $\bf \Delta$ in terms of $A_i^\pm$ which, in their term, are expressed in terms of $a_i$. In \eqref{eq: series for a} we will compute the asymptotic behaviour of $a_i$. Substituting these expansions into explicit formulas (using Mathematica)\eqref{eq:lambda explicit} we get expansions for $\lambda_i$ and $\sqrt{\lambda_i}$.

\begin{lemma} \label{eigenvT}
The following Taylor expansions hold around the origin
\begin{equation}
\label{eq:lambda series}
\begin{aligned}
\lambda_1(r)&=\frac{r^2}{2^6}- \frac{7}{2^{14} 3^2 5} r^6+O(r^8), &
\sqrt{\lambda_1(r)}&=\frac{r}{4\sqrt{2}}+O(r^5)\\
\lambda_2(r)&= \frac{r^4}{2^{10} 3^2}-\frac{r^6}{2^{13} 3^3 5}+O(r^8), &
\sqrt{\lambda_2(r)}&=\frac{r^2}{2^5 3}+O(r^4) \\
\lambda_3(r)&=O(r^8), &
\sqrt{\lambda_3(r)}&=O(r^4)\\
\lambda_4(r)&=\frac{r^2}{2^5}+ \frac{37}{2^{13} 3^2 5}r^6+O(r^8), &
\sqrt{\lambda_4(r)}&=\frac{r}{4\sqrt{2}}+O(r^5)\\
\lambda_5(r) &= \frac{r^4}{2^{10} 3^2} + \frac{7}{2^{14} 3^3 5}r^6 +O(r^8),  &
\sqrt{\lambda_5(r)}&=\frac{r^2}{2^5 3}+O(r^4)\\
\lambda_6(r)&=\frac{2}{3}- \frac{5}{2^3 3^3}r^2+ \frac{191}{2^{10} 3^5}r^4- \frac{11 \cdot 241}{2^{14} 3^7 5}r^6+O(r^8)  &
\sqrt{\lambda_6(r)}&=\frac{\sqrt{2}}{\sqrt{3}}- \frac{5r^2}{2^4 3^2\sqrt{6}}+O(r^4).
\end{aligned}
\end{equation}

\end{lemma}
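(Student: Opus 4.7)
The plan is to substitute the Taylor expansions of the covariance entries $a_i(r)$ --- which come from the covariance computations in Appendix \ref{matrixdelta} and ultimately from the Bessel-function series for $J_0$ --- into the closed-form eigenvalue expressions \eqref{eq:lambda explicit} obtained in Lemma \ref{eigenv}, and then collect powers of $r$. Since Lemma \ref{eigenv} already diagonalises the block structure of ${\bf \Delta}(r)$ in exact form, no further linear algebra is required: the question is purely one of power-series manipulation.

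First, from the series \eqref{eq: series for a} for the $a_i(r)$, I would compute Taylor expansions of the linear combinations $A_j^\pm(r) = a_j(r) \pm a_{j+4}(r)$ for $j = 1,2,3,4$ (with the extra constant $2/3$ in $A_1^+$), up to order $r^6$ and a few orders beyond as a safety margin, since $\lambda_3$ turns out to vanish to order $r^8$. The eigenvalues $\lambda_1 = A_2^-$ and $\lambda_2 = A_2^+$ can then be read off immediately. For the four remaining eigenvalues, each of the form
\[
\frac{1}{2}\bigl[(A_1^{\pm} + A_3^{\pm}) \pm \sqrt{(A_1^{\pm} - A_3^{\pm})^2 + 4(A_4^{\pm})^2}\bigr],
\]
I would identify the leading order of the quantity under the square root, factor it out, and expand using the binomial series $\sqrt{1+x} = 1 + x/2 - x^2/8 + \cdots$ to obtain a Taylor expansion of the square root; combining with the trace term $\frac{1}{2}(A_1^\pm + A_3^\pm)$ then yields $\lambda_3, \lambda_4, \lambda_5, \lambda_6$. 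Finally, once each $\lambda_j(r) = c_j r^{k_j} + \cdots$ with $c_j > 0$ is known, the claimed series for $\sqrt{\lambda_j(r)}$ follows by factoring out $\sqrt{c_j}\, r^{k_j/2}$ and applying the $\sqrt{1+x}$ expansion once more; for the downstream use in Section \ref{subsec: proof of theorem}, only the leading (and in some cases the next) term is actually required.

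The main obstacle is the eigenvalue $\lambda_3$, whose formal expression $\frac{1}{2}(A_1^- + A_3^-) - \frac{1}{2}\sqrt{(A_1^- - A_3^-)^2 + 4(A_4^-)^2}$ exhibits near-total cancellation: the two terms agree through order $r^6$, so that the surviving $r^8$ contribution can be captured only if the Taylor coefficients of $A_1^-, A_3^-$ and $A_4^-$ are kept to comparably high order throughout. A secondary subtlety is that the leading order of $(A_1^{-} - A_3^{-})^2 + 4(A_4^{-})^2$ must itself be identified correctly before the binomial expansion can be applied, otherwise the wrong power of $r$ gets factored out and the cancellation is mis-tracked. Doing this book-keeping by hand is error-prone, so, as the authors indicate, the substitutions and series expansions are carried out symbolically in exact arithmetic in Mathematica, and what the proof then does is verify that the symbolic output reproduces \eqref{eq:lambda series}. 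The same remark applies, to a lesser degree, to the second-order coefficients of $\lambda_1, \lambda_4, \lambda_5, \lambda_6$ displayed in the statement.
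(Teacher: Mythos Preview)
Your proposal is correct and matches the paper's approach essentially verbatim: the paper simply states that one substitutes the expansions \eqref{eq: series for a} for the $a_i$ into the explicit eigenvalue formulas \eqref{eq:lambda explicit} of Lemma~\ref{eigenv}, performing the power-series manipulations symbolically in Mathematica. Your additional remarks on the cancellation in $\lambda_3$ and the need to track sufficiently many terms are accurate and add helpful detail that the paper leaves implicit.
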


After obtaining the explicit formulas for eigenvalues we, again, use computer algebra to find explicit formulas for eigenvectors  of $\bf \Delta$.

\begin{lemma} \label{eigenvect}
For every $r>0$, the following vectors $v_i(r)$ are the eigenvectors of the matrix ${\bf \Delta}(r)$ corresponding to $\lambda_i(r)$
\begin{equation}
\label{eq:explicit eigenvectors}
\begin{aligned}
v_1(r)&=(0,-1,0,0,1,0),\\
v_2(r)&=(0,1,0,0,1,0), \\
v_3(r)&=( v_{3 1}(r) , 0,-1,- v_{3 1}(r), 0,1), \\
v_4(r)&=( v_{4 1}(r) ,0,-1,- v_{4 1}(r) ,0,1), \\
v_5(r)&=(- v_{5 1}(r) ,0,1, - v_{5 1}(r) ,0,1), \\
v_6(r)&=(- v_{6 1}(r) ,0,1,- v_{6 1}(r) ,0,1).
\end{aligned}
\end{equation}
where
\begin{equation*}
\begin{aligned}
v_{3 1}(r)&= \frac{A_3^-(r) - A_1^-(r) + \sqrt{ [A_3^-(r) - A_1^-(r) ]^2 + 4 A_4^-(r)^2 }}{2 A_4^-(r)}, \\
v_{4 1}(r)&= \frac{A_3^-(r) - A_1^-(r) - \sqrt{ [A_3^-(r) - A_1^-(r) ]^2 + 4 A_4^-(r)^2 }}{2 A_4^-(r)}, \\
v_{5 1}(r)&= \frac{A_3^+(r) - A_1^+(r) + \sqrt{ [A_3^+(r) - A_1^+(r) ]^2 + 4 A_4^+(r)^2 }}{2 A_4^+(r)}, \\
v_{6 1}(r)&= \frac{A_3^+(r) - A_1^+(r) - \sqrt{ [A_3^+(r) - A_1^+(r) ]^2 + 4 A_4^+(r)^2 }}{2 A_4^+(r)}.
\end{aligned}
\end{equation*}

\end{lemma}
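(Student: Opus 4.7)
The proof I would give rests on exploiting the symmetry of ${\bf \Delta}(r)$, which has the block form
$$
{\bf \Delta}(r)=\begin{pmatrix}\Delta_1(r)&\Delta_2(r)\\ \Delta_2(r)&\Delta_1(r)\end{pmatrix}
$$
with \emph{equal} diagonal blocks. For any matrix of this shape, a direct computation gives
$$
\begin{pmatrix}\Delta_1&\Delta_2\\ \Delta_2&\Delta_1\end{pmatrix}\begin{pmatrix}v\\ \varepsilon v\end{pmatrix}=\begin{pmatrix}(\Delta_1+\varepsilon\Delta_2)v\\ \varepsilon(\Delta_1+\varepsilon\Delta_2)v\end{pmatrix},\qquad \varepsilon\in\{+1,-1\},
$$
so eigenvectors of ${\bf \Delta}(r)$ split into two families: $(v,v)$ with $v$ an eigenvector of $\Delta_1+\Delta_2$, and $(v,-v)$ with $v$ an eigenvector of $\Delta_1-\Delta_2$. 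This reduces the $6\times 6$ spectral problem to two $3\times 3$ ones.

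Next I would use the definitions \eqref{eq:definition delta_i} together with the notation $A_i^{\pm}$ to observe that
$$
\Delta_1(r)\pm\Delta_2(r)=\begin{pmatrix}A_1^{\pm}(r)&0&A_4^{\pm}(r)\\ 0&A_2^{\pm}(r)&0\\ A_4^{\pm}(r)&0&A_3^{\pm}(r)\end{pmatrix}.
$$
The zeros in the second row and column mean that each of these matrices decouples further into a $1\times 1$ block acting on the middle coordinate, with eigenvalue $A_2^{\pm}(r)$ and eigenvector $(0,1,0)$, and a $2\times 2$ block $\begin{pmatrix}A_1^{\pm}&A_4^{\pm}\\ A_4^{\pm}&A_3^{\pm}\end{pmatrix}$ acting on the $(x_1,x_3)$-coordinates. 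Lifting $(0,1,0)$ back via the $\pm$ rule yields the eigenvectors $v_1=(0,-1,0,0,1,0)$ (with eigenvalue $A_2^-=\lambda_1$) and $v_2=(0,1,0,0,1,0)$ (with eigenvalue $A_2^+=\lambda_2$), which match \eqref{eq:explicit eigenvectors}.

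For the $2\times 2$ blocks I would apply the standard quadratic formula to
$$
\det\begin{pmatrix}A_1^{\pm}-\lambda&A_4^{\pm}\\ A_4^{\pm}&A_3^{\pm}-\lambda\end{pmatrix}=\lambda^2-(A_1^{\pm}+A_3^{\pm})\lambda+A_1^{\pm}A_3^{\pm}-(A_4^{\pm})^2=0,
$$
recovering precisely the four formulas for $\lambda_3,\lambda_4,\lambda_5,\lambda_6$ of Lemma \ref{eigenv}. An eigenvector of this $2\times 2$ block corresponding to $\lambda$ is obtained from the first row as $(x_1,x_3)\propto(A_4^{\pm},\lambda-A_1^{\pm})$; normalizing to have $x_3=\mp 1$ (minus for the ``$-$'' case, plus for the ``$+$'' case, matching the sign pattern in $v_3,\dots,v_6$) gives $x_1=A_4^{\pm}/(A_1^{\pm}-\lambda)$. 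A short rationalization,
$$
\frac{A_4^-}{A_1^--\lambda_3}=\frac{2A_4^-}{A_1^--A_3^-+\sqrt{(A_1^--A_3^-)^2+4(A_4^-)^2}}=\frac{A_3^--A_1^-+\sqrt{(A_1^--A_3^-)^2+4(A_4^-)^2}}{2A_4^-},
$$
reproduces $v_{31}(r)$, and the analogous identities give $v_{41},v_{51},v_{61}$. Finally, lifting each $(x_1,x_3)$-eigenvector back to $\mathbb{R}^6$ via $(v,-v)$ in the ``$-$'' case and $(v,v)$ in the ``$+$'' case yields exactly the vectors listed in \eqref{eq:explicit eigenvectors}.

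There is no serious obstacle here; the computation is essentially bookkeeping, and the only mild care needed is to ensure that the sign conventions chosen in defining $v_{31},\dots,v_{61}$ are consistent with the rationalization above and with the $(v,\pm v)$ lifting, which amounts to a routine check against the formulas in Lemma \ref{eigenv}.
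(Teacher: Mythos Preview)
Your argument is correct and is in fact the natural continuation of the paper's own proof of Lemma~\ref{eigenv}: there the block structure of ${\bf \Delta}(r)$ is used to factor the characteristic polynomial as $\det(\Delta_1-\Delta_2-\lambda I)\det(\Delta_1+\Delta_2-\lambda I)$, but for Lemma~\ref{eigenvect} the paper does not give a proof at all and simply records that the eigenvectors were obtained by computer algebra. Your observation that $(v,\varepsilon v)$ is an eigenvector of ${\bf \Delta}$ if and only if $v$ is an eigenvector of $\Delta_1+\varepsilon\Delta_2$ closes this gap by hand and makes the structure of the six eigenvectors transparent; this is cleaner than a symbolic computation and explains, for instance, why the second and fifth components of $v_3,\dots,v_6$ vanish.

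One small slip to fix when you write this out: the single formula $x_1=A_4^{\pm}/(A_1^{\pm}-\lambda)$ is only correct in the ``$-$'' case (where you normalize to $x_3=-1$). In the ``$+$'' case, normalizing $(A_4^+,\lambda-A_1^+)$ to have $x_3=+1$ gives $x_1=A_4^+/(\lambda-A_1^+)$, i.e.\ the opposite sign. This is exactly what is needed, since $v_5$ and $v_6$ have first coordinate $-v_{51}$ and $-v_{61}$ rather than $+v_{51}$ and $+v_{61}$; the rationalization you sketched then goes through verbatim. You already flag the sign bookkeeping as the only delicate point, so this is not a gap in the approach, just a line to correct.
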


The elements of $v_i$ are explicit algebraic expressions in terms of $a_i$ that are defined by \eqref{eq:definition delta_i}. Normalizing the vectors and using expansions of $a_i$ \eqref{eq: series for a} we obtain the following expansion for the matrix $Q$

\begin{lemma}
\label{lem: expansion for Q}
The orthogonal matrix $Q(r)$ of normalized eigenvectors of ${\bf \Delta}(r)$ is
\begin{equation}
\label{eq: expansion for Q}
\begin{aligned}
Q=&\begin{pmatrix}
 0 & 0 & -\dfrac{1}{2} & \dfrac{1}{2} & 0 & \dfrac{1}{\sqrt{2}}\\
-\dfrac{1}{\sqrt{2}} & \dfrac{1}{\sqrt{2}} & 0 & 0 & 0 & 0\\
 0 & 0 & -\dfrac{1}{2} & -\dfrac{1}{2}  & \dfrac{1}{\sqrt{2}}& 0\\
 0 & 0 & \dfrac{1}{2} & -\dfrac{1}{2} & 0 & \dfrac{1}{\sqrt{2}}\\
\dfrac{1}{\sqrt{2}} & \dfrac{1}{\sqrt{2}} & 0 & 0 & 0 & 0\\
0 & 0 & \dfrac{1}{2} & \dfrac{1}{2}  & \dfrac{1}{\sqrt{2}}& 0
\end{pmatrix}
\\
+r^2&
\begin{pmatrix}
0 & 0 & -\dfrac{1}{48} & -\dfrac{1}{48} & -\dfrac{1}{2^5 3 \sqrt{2}} & 0 \\
0 & 0 & 0 & 0 & 0 & 0 \vphantom{\dfrac{1}{2}}\\
0 & 0 & \dfrac{1}{48} & -\dfrac{1}{48} & 0 & -\dfrac{1}{2^5 3 \sqrt{2}}  \\
0 & 0 & \dfrac{1}{48} & \dfrac{1}{48} & -\dfrac{1}{2^5 3 \sqrt{2}} & 0 \\
0 & 0 & 0 & 0 & 0 & 0  \vphantom{\dfrac{1}{2}} \\
0 & 0 & -\dfrac{1}{48} & \dfrac{1}{48} & 0 & \dfrac{1}{2^5 3 \sqrt{2}}  \\
\end{pmatrix}+O(r^4)
\end{aligned}
\end{equation}
\end{lemma}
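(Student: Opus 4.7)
The plan is to derive the expansion directly from the explicit eigenvector formulas in Lemma \ref{eigenvect}, combined with the expansions of the entries $a_i(r)$ referenced in equation \eqref{eq: series for a}. First I would form Taylor expansions of the auxiliary quantities $A_1^\pm(r), A_2^\pm(r), A_3^\pm(r), A_4^\pm(r)$. Since each $a_i$ is analytic in $r$ (and in fact in $r^2$, by symmetry considerations coming from the rotational invariance of $\psi$), these are routine linear combinations of known series. In particular one should track the leading order behaviour carefully: the ``$-$'' combinations vanish as $r\to 0$ while the ``$+$'' combinations have nontrivial limits, and this dictates which powers of $r$ appear in the final answer.

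Second, substitute these expansions into the explicit expressions for $v_{31}(r), v_{41}(r), v_{51}(r), v_{61}(r)$. The apparent difficulty here is that the denominator $A_4^\pm(r)$ vanishes at $r=0$, so the formulas look singular. The fix is to factor out the common leading power of $r$ from numerator and denominator inside the radical, expand the square root as $\sqrt{X^2+Y^2}$ with $X=(A_3^\pm-A_1^\pm)/r^k$ and $Y=2A_4^\pm/r^k$ both $O(1)$, and choose the sign of the square root consistently with the labelling of $\lambda_3, \lambda_4$ (resp.\ $\lambda_5, \lambda_6$) in Lemma \ref{eigenv}. The resulting $v_{i1}(r)$ are analytic, with constant terms $\pm 1$ matching the target matrix, and even-power corrections that can be read off.

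Third, normalize. For $i=1,2$ the norm is exactly $\sqrt{2}$ and independent of $r$; for $i=3,\dots,6$ the squared norm equals $2(v_{i1}(r)^2+1)$, whose reciprocal square root is expanded as another Taylor series in $r^2$. Multiplying out the product of the (unnormalized) eigenvector entries and the normalization factor yields each column of $Q(r)$ to any desired order. Collecting entries of order $1$ and of order $r^2$ and discarding $O(r^4)$ terms gives precisely the claimed matrix. Orthogonality at each order is automatic because $\bf \Delta$ is symmetric and its eigenvalues are generically distinct on $r>0$, so the step really only requires unit-norming; nonetheless checking $Q^t Q = I + O(r^4)$ serves as a useful self-consistency test.

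The main obstacle is purely one of bookkeeping: the expressions at intermediate stages are long rational functions of the $a_i$, and to extract the coefficient of $r^2$ in each entry reliably one must carry all $a_i(r)$ to sufficient order (certainly through $r^4$) and keep track of cancellations in the radicals. This is why, as the authors note for the eigenvalues, the actual computation is most safely carried out with a computer algebra system performing symbolic manipulation, rather than by hand.
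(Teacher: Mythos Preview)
Your proposal is correct and follows essentially the same approach as the paper: the authors also obtain the expansion by normalizing the explicit eigenvectors of Lemma \ref{eigenvect} and substituting the series \eqref{eq: series for a} for the $a_i$, with the symbolic manipulations delegated to Mathematica. Your additional remarks on handling the apparent $0/0$ in the $v_{i1}$ and on the $Q^tQ=I+O(r^4)$ sanity check are sound elaborations of steps the paper leaves implicit.
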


\section{Expansions of covariance matrices}

\subsection{
\texorpdfstring{Covariance matrix of $(\nabla \Psi(z), \nabla^2 \Psi(z))$}
{Covariance matrix of (Psi'(z),Psi''(z))}
}

\label{cov-exp}

In this section we compute the covariance matrix $\Sigma$ of the $5$-dimensional centred Gaussian vector which combines the gradient and the elements of the Hessian evaluated at $z$.
By the translation invariance of $\Psi$, $\Sigma$ does not depend on the point $z \in \mathbb{R}^2$. It is convenient to write the covariance matrix in blocks
$$\Sigma=\left(\begin{array}{ccc}
A  & B\\
 B^t& C
\end{array} \right),$$
where
\begin{align*}
A=\mathbb{E}[ \nabla \Psi(z)^t \cdot \nabla \Psi(z)], \hspace{1cm} B=\mathbb{E}[ \nabla \Psi(z)^t \cdot  \nabla^2 \Psi(z)], \hspace{1cm}  C=\mathbb{E}[ \nabla^2 \Psi(z)^t \cdot  \nabla^2 \Psi(z)].
\end{align*}
It is a standard fact that covariances of the derivative are given by derivatives of the covariance kernel.

The computations of $A$, $ B$ and $C$ are quite lengthy, but they do not require sophisticated arguments other than iterative differentiation of Bessel functions.  For example to compute $A$ we first have to compute expressions
\begin{equation*}
 \mathbb{E}[ \partial_{z_i} \Psi(z) \; \partial_{w_j} \Psi(w)   ] =
 \frac{\partial^2}{ \partial_{z_i}  \partial_{w_j}  }  \psi(z-w) = \frac{\partial^2}{ \partial_{z_i}  \partial_{w_j}  }  J_0(|z-w|)
\end{equation*}
where $z=(z_1,z_2)$ and $w=(w_1,w_2)$ are two points in $\R^2$. The elements of $A$ are obtained by passing to the limit $w\to z$.

To give an example of such computation we give details of the computation of $(A)_{1,1}$.
For this we first use the chain rule to obtain
\begin{align*}
 \frac{\partial^2}{ \partial_{z_1}  \partial_{w_1}  }  J_0(|z-w|) &=J''_0(|z-w|)\, \left( \frac{\partial}{ \partial_{z_1}   }  |z-w|\right) \;  \left(\frac{\partial}{ \partial_{w_1}  }  |z-w|\right) +J'_0( |z-w|)  \, \frac{\partial^2}{ \partial_{z_1}  \partial_{w_1}  }  |z-w|.
 \end{align*}
The zeroth Bessel function could be defined by power series
$$
J_0(x)=\sum_{n=0}^\infty \frac{(-1)^n}{(n!)^2}\br{\frac{x}{2}}^{2n}.
$$
From this expansion we can immediately get the expansions for $J_0'$ and $J_0''$ and show that
$$
\lim_{w\to z}  \frac{\partial^2}{ \partial_{z_1}  \partial_{w_1}  }  J_0(|z-w|)=\frac{1}{2}.
$$
In the same way we compute the other entries of $A$ and obtain
\begin{align} \label{MA}
A= \left(\begin{array}{cc}
\frac{1}{2} &0 \\
0&\frac{1}{2}
\end{array} \right).
\end{align}
Since the first and second order derivatives of any stationary field are independent at every fixed point $z \in \mathbb{R}^2$, we immediately have $B=0$.
With analogous calculations, but using higher order derivatives of $J_0$ we compute the entries of $C$ and find  that
\begin{align} \label{MC}
C= \left( \begin{array}{ccc}
\frac{3 }{8} &0& \frac{1}{8} \\
0&\frac{1}{8}&0\\
\frac{1}{8}&0&\frac{3}{8}
\end{array}\right).
\end{align}

\subsection{\texorpdfstring{Covariance matrix of $(\nabla \Psi(z), \nabla \Psi(w), \nabla^2 \Psi(z), \nabla^2 \Psi(w))$}{Covariance matrix of (Psi'(z), Psi'(w), Psi''(z), Psi''(w))}} \label{ucn}

We compute the covariance matrix ${\bf \Sigma}(z,w)$ for the $10$-dimensional Gaussian random vector
which combines the gradient and the elements of the Hessian evaluated at $z, w$:
$$(\nabla \Psi(z), \nabla \Psi(w), \nabla^2 \Psi(z), \nabla^2 \Psi(w)),$$
only for the case $z=(0,0)$ and $w=(0,r)$.
As explained in Section \ref{subsec: proof of theorem} this is sufficient in order to evaluate $K_{2}(r)$ for all relevant $r$, thanks to the isotropic property of $\Psi$. It is convenient to write the matrix ${\bf \Sigma}(z,w)$ in block form, i.e.,
$$
{\bf \Sigma}(z,w)={\bf \Sigma}(r)=\left(\begin{array}{ccc}
{\bf A}(z,w)&{\bf B}(z,w)\\
{\bf B}^t(z,w)&{\bf C}(z,w)
\end{array} \right).
$$
The matrix ${\bf A}$ also has a natural block structure
\begin{align*}
&\left. {\bf A}(z,w)\right|_{{z=(0,0),}{w=(0,r)}}={\bf A}(r)= \left(\begin{array}{cc}
A & A(r) \\
A(r)&A
\end{array} \right),
\end{align*}
where $A$ is the same as in \eqref{MA}, and $A(r)$ turns out to be a diagonal matrix, we denote its diagonal elements by $\alpha_i(r)$
\begin{align*}
&A(r)= \left(\begin{array}{cc}
\alpha_1(r) & 0 \\
0&\alpha_2(r)
\end{array} \right)
\end{align*}
The diagonal elements $\alpha_i$ are found by differentiating the covariance kernel of $\Psi$:
\begin{equation}
\begin{aligned}
\alpha_1(r)&=\left. \frac{\partial^2}{\partial_{z_1} \partial_{w_1}} J_0( |z-w|) \right|_{z=(0,0), w=(0,r)}=-J'_0( r)  \frac{1}{r},\\
\alpha_2(r)&=\left. \frac{\partial^2}{\partial_{z_2} \partial_{w_2}} J_0( |z-w|) \right|_{z=(0,0), w=(0,r)}=-J''_0( r) .
\end{aligned}
\end{equation}

Again, using the block structure of $\bf A$ we can write its determinant as
$$
\text{det}({\bf A}(r))=
\br{\alpha_1^2(r) - \frac{1}{4}  }
\br{ \alpha_2^2(r) - \frac{1}{4} }.
$$
From the Taylor series for $J_0$ one immediately gets
\begin{equation*}
\begin{aligned}
\alpha_1(r)&=-J'_0(r) \frac{1}{r}=\frac{1}{2}- \frac{1}{2^4} r^2+O(r^4),
\\
 \alpha_2(r)&=-J''(r)=\frac{1}{2}- \frac{3}{2^4}r^2+O(r^4)
\end{aligned}
\end{equation*}
so that
\begin{align} \label{14:25}
\det(A)=\frac{3r^4}{2^8}+O(r^6).
\end{align}
With analogous calculations we derive also the entries of the matrices ${\bf B}$ and ${\bf C}$: we have
\begin{align*}
\left. {\bf B}(z,w)\right|_{{z=(0,0),}{w=(0,r)}}={\bf B}(r)= \left( \begin{array}{cc} 0 &B(r) \\ - B(r)& 0 \end{array} \right),
\end{align*}
where
\begin{align*}
B(r)=\left( \begin{array}{ccc}
0&\beta_{1}(r) &0 \\
\beta_{1}(r)&0&\beta_{2}(r)
\end{array}\right)
\end{align*}
and, in the same way as before, we obtain explicit formulas in terms of $J_0$ and expansions at $r=0$
\begin{align*}
\beta_1(r)& =-J''_0( r) \frac{1}{r} +J'_0( r) \frac{1}{r^2}=-\frac{r}{8}+\frac{r^3}{96}+O(r^5)\\
  \beta_2(r)& = - J'''_0( r)=-\frac{3r}{8}+\frac{5r^3}{96}+O(r^5).
\end{align*}
In the same way
\begin{align*}
\left. {\bf C}(z,w)\right|_{{z=(0,0),}{w=(0,r)}}=\left( \begin{array}{cc}C & C(r) \\ C(r) &C \end{array} \right),
\end{align*}
where $C$ defined in \eqref{MC} and
\begin{align*}
C(r)= \left( \begin{array}{ccc}
\gamma_{1}(r)&0&\gamma_{2}(r) \\
0 &\gamma_{2}(r)&0 \\
\gamma_{2}(r)&0&\gamma_{3}(r)
\end{array}\right),
\end{align*}
with
\begin{align*}
\gamma_1(r)&= J''_0( r)\frac{3 }{r^2} - J'_0( r)\frac{3 }{r^3}=\frac{3}{8}  -\frac{r^2}{32} + O(r^4),
\\
\gamma_2(r)&=  \frac{J'''_0( r)}{r} - J''_0( r) \frac{2 }{r^2} +  J'_0( r)\frac{2 }{r^3}=\frac{1}{8} - \frac{r^2}{32} + O(r^4),
\\
 \gamma_3(r)& = J''''_0( r)=\frac{3}{8} - \frac{5 r^2}{32}+O(r^4).
\end{align*}

\subsection{Conditional covariance matrix} \label{matrixdelta}

As explained before, the covariance matrix of the conditional vector
$$
(\nabla^2 \Psi(z), \nabla^2 \Psi(w) | \nabla \Psi(z)= \nabla \Psi(w)=0)
$$
is given by
$$
{\bf \Delta}(r)={\bf C}(r)-{\bf B}(r)^t {\bf A}(r)^{-1} {\bf B}(r)=\left( \begin{array}{cc}  \Delta_1(r) &  \Delta_2(r) \\ \Delta_2(r) & \Delta_1(r)
\end{array} \right)
$$
where $\Delta_i$ are defined in \eqref{eq:definition delta_i}.

Since we already have explicit formulas for elements of $\bf A$, $\bf B$, and $\bf C$
we can obtain the following explicit formulas and expansions for $a_i$ that define $\bf \Delta$:

\begin{equation}
\label{eq: series for a}
\begin{aligned}
a_1(r)&= - \frac{2 \beta_1^2(r)}{1-4 \alpha_2^2(r)}+\frac{1}{2^3 3}=- \frac{13}{2^7 3^3} r^2- \frac{151}{2^{11} 3^5} r^4- \frac{1531}{2^{15} 3^7}r^6+O(r^8),\\
a_2(r)&= -\frac{2 \beta_1^2(r)}{1-4 \alpha_1^2(r)}+\frac{1}{2^3} = \frac{1}{2^7} r^2 + \frac{1}{2^{11} 3^2} r^4 - \frac{23}{2^{15} 3^3 5} r^6+O(r^8),\\
a_3(r)&=- \frac{2 \beta_2^2(r)}{1-4 \alpha_2^2(r)}+ \frac{3}{2^3} =\frac{1}{2^7} r^2 + \frac{41}{2^{11} 3^3}r^4+ \frac{2617}{2^{15} 3^5 5 }r^6 +O(r^8),\\
a_4(r)&= - \frac{2 \beta_1(r) \beta_2(r)}{1-4 \alpha_2^2(r)}+ \frac{1}{2^3}= - \frac{5}{2^7 3^2} r^2 - \frac{23}{2^{11} 3^4} r^4 + \frac{521}{2^{15} 3^6 5} r^6 + O(r^8),\\
a_5(r)&=\gamma_1(r)- \frac{4 \alpha_2(r) \beta_1^2(r) }{1-4 \alpha_2^2(r)}- \frac{1}{3} = - \frac{67}{2^7 3^3} r^2 +\frac{7 \cdot 71}{2^{11} 3^5}r^4 +\frac{13 \cdot 271}{2^{15} 3^7 5}r^6 +O(r^8),\\
a_6(r)&= \gamma_2(r) - \frac{4 \alpha_1(r) \beta_1^2(r)}{1-4 \alpha_1^2(r)}= -\frac{1}{2^7} r^2 + \frac{1}{2^{11} 3^2} r^4 +\frac{19}{2^{15} 3^3 5}r^6 +O(r^8), \\
a_7(r)&= \gamma_3(r)- \frac{4 \alpha_2(r) \beta_2^2(r)}{1-4 \alpha_2^2(r)} = -\frac{1}{2^7}r^2 -\frac{31}{2^{11} 3^3} r^4 -\frac{2621}{2^{15} 3^5 5} r^6 +O(r^8),\\
a_8(r)&= \gamma_2(r) - \frac{4 \alpha_2(r) \beta_1(r) \beta_2(r)}{1-4 \alpha_2^2(r)}= \frac{13}{2^7 3^2}r^2 -\frac{23}{2^{11} 3^4}r^4 + \frac{7}{2^{15} 3^6}r^6 +O(r^8).
\end{aligned}
\end{equation}

\bibliography{RPW}{}
\bibliographystyle{plain}

\end{document}